\pdfoutput=1
\documentclass[letterpaper, 10pt, conference]{ieeeconf}

\IEEEoverridecommandlockouts
\usepackage{amsmath,amssymb}

\usepackage{amsthm}
\usepackage{graphicx}
\usepackage{cite}
\usepackage[hidelinks]{hyperref}

\usepackage{setspace}
\everymath\expandafter{\the\everymath \thickmuskip=2mu \medmuskip=1mu}
\theoremstyle{plain}
\newtheorem{theorem}{Theorem}
\newtheorem{lemma}{Lemma}
\newtheorem{corollary}{Corollary}
\theoremstyle{definition}
\newtheorem{definition}{Definition}

\newcommand{\R}{\mathbb{R}}
\newcommand{\Rge}{\mathbb{R}_{\ge0}}
\newcommand{\U}{\mathcal{U}}
\newcommand{\Cs}{\mathcal{C}_{\mathrm{S}}}
\newcommand{\cA}{\mathcal{A}}
\newcommand{\RA}{\mathcal{R}_{\cA}}
\newcommand{\HA}{H_{\cA}}
\newcommand{\MA}{M_{\cA}}
\newcommand{\KA}{K_{\cA}}
\newcommand{\OT}{\mathcal{O}_{T}}
\newcommand{\kb}{\mathbf{k}_{\mathrm{b}}}
\newcommand{\vb}{f_{\mathrm{b}}}
\newcommand{\phib}{\varphi_{\mathrm{b}}}
\newcommand{\Phib}{\varPhi_{\mathrm{b}}}
\newcommand{\kp}{\mathbf{k}_{\mathrm{p}}}
\newcommand{\kQ}{\mathbf{k}^{\star}_{\cA}}
\newcommand{\WQ}{\mathbf{W}_{\mathrm{Q}}}
\newcommand{\lamA}{\lambda_{\cA}}
\newcommand{\epsA}{\varepsilon_{\cA}}
\newcommand{\nh}{n_{h}}
\newcommand{\nlink}{n_{\mathrm{link}}}
\newcommand{\Tch}{T_{\mathrm{ch}}}
\newcommand{\albhat}{\widehat{\alpha}}
\newcommand{\alb}{\alpha_{\mathrm{b}}}
\newcommand{\appext}{\preceq_{\mathrm{app}}}
\newcommand{\dist}{\operatorname{dist}}
\newcommand{\defeq}{\triangleq}
\DeclareMathOperator*{\argmin}{arg\,min}

\begin{document}

\title{ \bf
Backup Control Barrier Functions with Memory \\ for Online Set Expansion
}

\author{ M. Yusuf Uzun and Ersin Da\c{s}
\thanks{ M. Yusuf Uzun and Ersin Da\c{s}
    are with the Department of Mechanical, Materials, and Aerospace Engineering, Illinois Institute of Technology, Chicago, IL 60616, USA, {\texttt{muzun@hawk.illinoistech.edu}, \texttt{edas2@illinoistech.edu}}.}
}

\maketitle
\thispagestyle{plain}
\pagestyle{plain}

\begin{abstract}
Backup control barrier functions (bCBFs) construct implicit safe sets for nonlinear systems with bounded inputs using finite-horizon predictions under a backup controller. As in model predictive control, an invariant backup set serves as a terminal set that guarantees safe continuation beyond the prediction horizon. However, reaching a small terminal set may require long prediction horizons, increasing the cost of online prediction. In this paper, we introduce an online construction that builds a certified region from sets stored along backup trajectories while keeping the prediction horizon fixed. The method stores finite collections of balls and connects each newly certified collection to the backup set or to previously stored balls. We derive conditions that guarantee forward invariance of the union of these balls and the backup set under a fixed backup controller. These conditions also ensure that every stored state reaches the backup set in finite time. The resulting union serves as an enlarged terminal set, allowing safety to be certified with shorter prediction horizons. We formulate a feasible quadratic program for safe control and prove that safety is preserved as additional certified collections are stored online. Simulations on a quadrotor demonstrate safe task completion with a shorter prediction horizon than the standard bCBF.
\end{abstract}

\section{Introduction}

Safety-critical control of autonomous systems requires controllers that satisfy safety constraints while respecting input bounds.
Control barrier functions (CBFs) provide a systematic approach to this problem by enforcing the forward invariance of a prescribed safe set \cite{ames2017}.
However, verifying that a safety function is a CBF can be challenging for nonlinear systems with bounded inputs, since an admissible safe input may not exist at every state in the set.

Backup control barrier functions (bCBFs) construct online a subset of the safe set in which an admissible safe input always exists, using finite-horizon predictions under a backup controller, without computing a maximal controlled invariant set \cite{gurriet2020}.
Starting from a backup set that is forward invariant under this controller, the implicit safe set consists of states from which the backup controller keeps the system safe and reaches the backup set within the prediction horizon.
Similar to model predictive control (MPC), the backup set serves as a terminal set in which safety can be maintained after the prediction horizon. The forward invariance of the implicit safe set can then be enforced through a quadratic program (QP) whose feasibility follows from the backup controller.
However, for a fixed terminal set, a short prediction horizon may yield a conservative implicit safe set. Increasing the prediction horizon can enlarge this set, but also increases the cost of online prediction. This motivates enlarging the terminal set while keeping the prediction horizon fixed.

Several extensions aim to enlarge the safe sets certified by backup-based safety filters \cite{kim2026}. These include aggregating multiple backup controllers \cite{ong2026}, separating the set-expanding controller from the backup controller \cite{vanwijk2026}, and using multiple backup policies with a state-dependent reachable horizon \cite{hwang2026}. Related approaches expand controlled invariant sets through linear-program certificates \cite{vahs2025} or approximate regions of attraction offline through flow-control barrier functions and constrained polytope expansion \cite{ubellacker2024}. Feasibility guarantees have also been developed for smoothed bCBFs \cite{alan2026}. Within predictive safety filter design \cite{wabersich2021}, backup trajectories have been used to extend safe sets during receding-horizon optimization \cite{tearle2021}, while online trajectory verification can retain a previously verified plan until a new plan is certified \cite{agrawal2024}. The retention and composition of certified trajectories also appear in sequential composition \cite{burridge1999}, funnel trees and libraries \cite{tedrake2010,majumdar2017}, and learning model predictive control \cite{rosolia2018}.

In this paper, we propose to enlarge the invariant terminal set online. The construction stores certified collections of balls, called links, along backup trajectories. Each link connects to the backup set or to a previously stored ball, so that every chain of links ends in the backup set. The union of the stored balls and the backup set serves as the enlarged terminal set. A prediction can therefore terminate in a stored ball, while the subsequent trajectory under the same backup controller reaches the backup set through the stored chain. The total time to the backup set can thus exceed the prediction horizon. The main contributions are as follows.
\begin{enumerate}
\item We establish sufficient conditions that guarantee forward invariance of the enlarged terminal set and finite time arrival at the backup set from every stored ball under the backup controller.
\item We prove that links can always be built along a finite portion of a backup trajectory that stays strictly inside the safe set and ends in the interior of the backup set or of a stored ball. The new links strictly enlarge the terminal set when the trajectory starts outside it.
\item We formulate a feasible, QP-based safety filter and establish safety and global existence of closed loop solutions under certified online additions, with finitely many updates in each bounded time interval.
\end{enumerate}

We organize this paper as follows. Section~\ref{sec:setting} provides preliminaries. Section~\ref{sec:growth} presents the terminal set construction.
Section~\ref{sec:filter} develops the continuous-time safety filter, and Section~\ref{sec:appends} describes online terminal set growth. Section~\ref{sec:sim}
presents a quadrotor example, and Section~\ref{sec:conclusion} concludes.

\section{Preliminaries}
\label{sec:setting}

\emph{Notation:} For a matrix ${\mathbf{M}  = \mathbf{M} ^{\top} \succ 0}$ and a vector ${v}$ of matching size, we write ${\| v \|_{\mathbf{M}} \defeq (v^{\top} \mathbf{M}  v)^{1/2}}$. Throughout, ${\mathbf{P} = \mathbf{P}^{\top} \succ 0}$ is a fixed ${n \times n}$ matrix, and ${\| \cdot \| \defeq \| \cdot \|_{\mathbf{P}}}$ denotes the norm on ${\R^n}$. For a center ${c \in \R^n}$ and radius ${r \geq 0}$, ${B(c, r) \defeq \{ x \in \R^n : \| x - c \| \leq r \}}$ denotes the closed ball. The function ${\dist(x, S) \defeq \inf_{p \in S} \| x - p \|}$ gives the distance from a point ${x}$ to a set ${S}$, ${\dist(K, S) \defeq \inf_{x \in K} \dist(x, S)}$ from a set ${K}$, and ${\operatorname{int} S}$ denotes the interior of ${S}$.

We consider a nonlinear control-affine system of the form
\begin{equation}\label{eq:system}
\dot x=f(x)+g(x)u,
\end{equation}
where ${x \in \R^n}$, ${u \in \U}$, ${f: \R^n \to \R^n}$ and ${g : \R^n \to \R^{n \times m}}$ are $C^1$, and the admissible input set ${\U \subset \R^m}$ is a convex polytope.

\subsection{Control Barrier Functions}
To describe safety, we define the safe set
\begin{equation*}
\Cs \defeq \{x \in \R^n : h_i(x) \ge 0, \ i = 1, \dots, \nh \},
\end{equation*}
with $C^1$ safety functions ${h_i : \R^n \to \R}$, and with interior ${\operatorname{int}\Cs = \{x : \min_i h_i(x) > 0\}}$.
The control objective is to keep the state in $\Cs$ while satisfying the input constraint ${u \in \U}$. Control barrier functions (CBFs) \cite{ames2017} provide conditions on the control input that render $\Cs$ forward invariant.
The functions ${h_i}$ are compatible CBFs \cite{tan2022} if there exist extended class-${\mathcal{K}}$ functions ${\alpha_i}$ such that ${\forall x \in \Cs}$,
\begin{equation}
\label{eq:setting:cbf-feasibility}
\sup_{u \in \U} ~ \min_{i = 1, \dots, \nh} \!
\bigl[ \nabla h_i(x) \bigl( f(x) + g(x)u \bigr)
+ \alpha_i \bigl( h_i(x) \bigr) \bigr] \!>\! 0.
\end{equation}
This condition requires a common admissible input for all safety functions at each state. Then, any locally Lipschitz controller ${\mathbf{k} : \R^n \to \U}$ satisfying
\begin{equation}
\label{eq:setting:cbf-control}
\!\!\nabla h_i(x) \bigl( f(x) + g(x)\mathbf{k}(x) \bigr) \!\geq\! - \alpha_i \bigl( h_i(x) \bigr),  \  i \!=\! 1, \dots, \nh,
\end{equation}
${\forall x \in \Cs}$ renders ${\Cs}$ forward invariant.

\subsection{Backup Control Barrier Functions}
While CBFs provide a systematic way to synthesize safe controllers, arbitrary safety functions ${h_i}$ may not satisfy the CBF condition \eqref{eq:setting:cbf-feasibility}, especially with bounded inputs. In this case, there may exist states in ${\Cs}$ for which no admissible input satisfies all the constraints in \eqref{eq:setting:cbf-control}. Backup CBFs (bCBFs) \cite{gurriet2020} address this issue by constructing a subset of ${\Cs}$ where safe inputs are guaranteed to exist and by generating corresponding safe controllers within the CBF framework. In the rest of this paper, the functions $h_i$ define the safe set and are not assumed to satisfy \eqref{eq:setting:cbf-feasibility}.

The \textit{backup controller} is a $C^1$ map ${\kb : \R^n \to \U}$ that yields the closed loop dynamics
\begin{equation}
\label{eq:setting:backup}
\dot x = \vb(x) \defeq f(x) + g(x) \kb(x).
\end{equation}
To ensure that solutions of \eqref{eq:setting:backup} are defined for all time, we assume that there exist constants ${G_0, G_1 \geq 0}$ such that ${\|\vb(x)\|\le G_0+G_1\|x\|}$ for all ${x \in \R^n}$.
The flow $\phib(t,x)$ denotes the state reached at time $t$ under $\kb$, starting from $x$ at time zero. Since $\vb$ is $C^1$ and satisfies this growth bound, the flow is defined for all ${(t,x) \in \R \times \R^n}$ and
is jointly $C^1$. It satisfies
\begin{gather}
\phib(0,x)=x,\qquad\partial_t\phib(t,x)=\vb\bigl(\phib(t,x)\bigr),
\label{eq:setting:complete}\\
\phib(t_1+t_2,x)=\phib\bigl(t_1,\phib(t_2,x)\bigr), \quad
\forall t_1,t_2\in\R .\label{eq:setting:semigroup}
\end{gather}
Then, differentiating \eqref{eq:setting:semigroup} in $t_2$ at $0$ gives \cite{chen2021}
\begin{equation}
\label{eq:setting:flowid}
\Phib(\tau, x) \vb(x) = \vb\bigl(\phib(\tau,x) \bigr),
\qquad \forall (\tau, x) \in \R \times \R^n,
\end{equation}
where ${\Phib(t,x) \defeq \partial\phib(t,x) / \partial x}$ is the sensitivity of the flow to its initial state.

To establish forward invariance, we adapt the comparison lemma to the maximum of finitely many $C^1$ functions.%
\begin{lemma}
\label{lem:nocross}
Consider finitely many functions $w_i : [t_0,t_1] \to \R$ of class $C^1$ with maximum ${w \defeq \max_i w_i}$ such that
${w(t_0) \geq 0}$. Suppose that, for constants ${\varepsilon > 0}$ and ${\lambda \in \R}$,
\begin{equation}
\label{eq:nocross:diff}
            \dot w_i(t)\ge\lambda\,w(t)
\end{equation}
holds for all ${t \in [t_0, t_1]}$ such that
${w(t) \in (-\varepsilon, 0]}$, for at least one $i$ with ${w_i(t) = w(t)}$. Then, ${w(t) \ge 0}$ for all
${t \in [t_0, t_1]}$.
\end{lemma}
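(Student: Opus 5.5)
We argue by contradiction. Suppose $w(s)<0$ for some $s\in(t_0,t_1]$. Since each $w_i$ is continuous, so is $w=\max_i w_i$. Let
\[
t^\star \defeq \sup\{t\in[t_0,s] : w(t)\ge 0\}.
\]
By continuity, $w(t^\star)=0$ and $t^\star<s$. Because $w(s)<0$ and $w$ is continuous, we may shrink $s$ toward $t^\star$ so that, on $(t^\star,s]$, we have $w(t)<0$ and also $w(t)>-\varepsilon$. Hence $w(t)\in(-\varepsilon,0]$ on the whole interval $[t^\star,s]$, and the hypothesis \eqref{eq:nocross:diff} is available at every point of it.

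The intended mechanism is a comparison argument on the upper envelope.

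\emph{Upper Dini derivative.} At each $t$ in $[t^\star,s)$, choose an active index $i$ (so $w_i(t)=w(t)$) for which $\dot w_i(t)\ge\lambda w(t)$ holds. Since $w\ge w_i$ with equality at $t$, the right upper Dini derivative satisfies
\[
D^+w(t)\;\ge\;\limsup_{h\downarrow0}\frac{w_i(t+h)-w_i(t)}{h}\;=\;\dot w_i(t)\;\ge\;\lambda w(t).
\]

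\emph{Integrating factor.} Set $v(t)\defeq e^{-\lambda(t-t^\star)}w(t)$. By the product rule for Dini derivatives with the positive $C^1$ factor $e^{-\lambda(t-t^\star)}$,
\[
D^+v(t)=e^{-\lambda(t-t^\star)}\bigl(D^+w(t)-\lambda w(t)\bigr)\ge 0
\]
on $[t^\star,s)$.

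\emph{Monotonicity.} A continuous function with nonnegative right upper Dini derivative everywhere on an interval is nondecreasing there. This is the standard Dini monotonicity lemma. To avoid needing strictness, one can apply it to $v+\delta t$ with $\delta>0$ and then let $\delta\to0$. It follows that $v(s)\ge v(t^\star)=0$, so $w(s)\ge0$. This contradicts $w(s)<0$.

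The main obstacle is that $w$ is only a maximum of $C^1$ functions and need not be differentiable, and that the hypothesis supplies the derivative bound only for \emph{some} active index. The Dini-derivative step handles this: any single active $w_i$ lower-bounds $D^+w$, so the choice of index does not matter.

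The remaining care is in the reduction to the band $(-\varepsilon,0]$. The value $t^\star$ is attained because $\{w\ge0\}\cap[t_0,s]$ is closed. The shrinking of $s$ is what guarantees that the hypothesis holds on the entire subinterval; otherwise the hypothesis would only be known at points where $w$ lies in the band. If a more elementary route is preferred, the monotonicity lemma could instead be replaced by a first-crossing argument on $v+\delta(t-t^\star)$.
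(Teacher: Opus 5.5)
Your proof is correct and follows essentially the same route as the paper: a contradiction at the last time $w$ is nonnegative, restriction to the band $(-\varepsilon,0]$, the integrating factor $e^{-\lambda t}$, and a linear perturbation to obtain strictness. The only difference is that the paper inlines your Dini monotonicity step as the first-crossing argument you mention at the end. It takes the last time $t_d$ with $e^{-\lambda t}w(t)+\sigma(t-t_b)\ge 0$ and contradicts $\dot z_i(t_d)\ge 0$ for an active $z_i=e^{-\lambda t}w_i$, so it never needs $D^+w$.
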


\begin{proof}
Suppose, for contradiction, that ${w(t_a) < 0}$ for some
${t_a \in (t_0,t_1]}$. Let ${t_b \defeq \max \{t \in [t_0, t_a] : w(t) \geq 0\}}$
and ${t_c \in (t_b,t_a]}$ be the first time with
${w \leq -\varepsilon}$, or $t_a$. On $[t_b,t_c)$ we have ${w(t_b) = 0}$ and ${-\varepsilon < w \le 0}$, while ${w(t_c) < 0}$. By
\eqref{eq:nocross:diff}, at any $t\in[t_b,t_c)$ some $i$ with $w_i(t)=w(t)$ has $\dot w_i(t)\ge\lambda w(t)$, so that ${z_i(t) \defeq e^{-\lambda t} w_i(t)}$ has ${\dot z_i(t) \geq 0}$. For ${\sigma > 0}$ satisfying ${e^{-\lambda t_c} w(t_c) + \sigma (t_c - t_b) < 0}$, we define ${Y(t) \defeq e^{-\lambda t} w(t) + \sigma (t - t_b)}$ and ${t_d \defeq \max \{t \in [t_b, t_c] : Y(t) \geq 0\}}$, so that ${t_d < t_c}$ and ${Y < Y(t_d)}$ on ${(t_d, t_c]}$. For such an $i$ at $t_d$ and ${t \in (t_d, t_c]}$,
\begin{equation*}
    \begin{aligned}
        z_i(t) & \leq e^{-\lambda t} w(t) = Y(t) - \sigma (t - t_b) \\
        & < Y(t_d) - \sigma (t - t_b) = z_i(t_d) - \sigma (t - t_d) ,
    \end{aligned}
\end{equation*}
so ${\dot z_i(t_d) \leq -\sigma < 0}$, contradicting ${\dot z_i(t_d) \geq 0}$.
\end{proof}

The backup construction \cite{gurriet2020} requires a set ${S_0 \subseteq \Cs}$ that is forward invariant under $\kb$, which we specify through a defining function and an inequality on its derivative along~\eqref{eq:setting:backup}.
\begin{definition}
	\label{def:base}
	A $C^1$ function ${b_0 : \R^n \to \R}$ and constants ${\varepsilon_0 > 0}$ and ${\lambda_0 \geq 0}$ define a \emph{base set} ${S_0 \defeq \{x \in \R^n : b_0(x) \geq 0\}}$ if $S_0$ is nonempty, compact and contained in $\Cs$, with interior ${\operatorname{int} S_0 = \{x : b_0(x) > 0\}}$, and
\begin{equation}
    \label{eq:setting:basedrift}
    \nabla b_0(x)\,\vb(x) \geq -\lambda_0\,|b_0(x)|,
\end{equation}
	for all $x$ satisfying ${|b_0(x)| < \varepsilon_0}$.
\end{definition}
\looseness=-1
The backup controller renders a base set $S_0$ forward invariant. For ${x \in S_0}$, \eqref{eq:setting:complete}, \eqref{eq:setting:basedrift} and Lemma~\ref{lem:nocross} with the single function ${w(t) \defeq b_0(\phib(t,x))}$, ${\varepsilon = \varepsilon_0}$ and ${\lambda = \lambda_0}$ on every $[0,t_1]$ give ${\phib(t,x)\in S_0}$ for all ${t\ge0}$. Note that condition~\eqref{eq:setting:basedrift} bounds the decrease of $b_0$ near zero without requiring attraction from outside $S_0$. For a compact set with a smooth boundary, forward invariant under $\kb$, such a $b_0$ can be constructed from the signed distance (positive inside, negative outside) by smooth saturation away from the boundary. The bound then holds for sufficiently small $\varepsilon_0$, with $\lambda_0$ a Lipschitz constant of $\vb$ near $S_0$. For a quadratic Lyapunov sublevel set with strict decrease of the Lyapunov function on the boundary, \eqref{eq:setting:basedrift} holds with $\lambda_0 = 0$ for sufficiently small $\varepsilon_0$ \cite{gurriet2020}.

The safe controller is designed to keep the state in the operating set of a terminal set: the states whose trajectory under $\kb$ over a fixed prediction horizon ${T > 0}$, the \emph{rollout}, stays safe and ends in that terminal set. A \textit{backup arc} is a finite portion of a backup trajectory, a solution of \eqref{eq:setting:backup}.
\begin{definition}
\label{def:opset}
For a closed set $\mathcal R\subseteq\Cs$, the \emph{operating set} is
\begin{equation}
\label{eq:setting:opset}
    \begin{aligned}
        \OT(\mathcal R) \defeq \bigl \{x \in \R^n :{} &\phib(T,x) \in \mathcal R,\
        h_i(\phib(\tau,x)) \geq 0,\\
        & \forall i\in\{1,\dots,\nh\},\ \forall \tau \in [0, T] \bigr\}.
    \end{aligned}
\end{equation}
\end{definition}
For ${\mathcal R = S_0}$, it is the implicit safe set, the $T$-time constrained reachable set of \cite{chen2021}. For closed sets ${\mathcal R \subseteq \mathcal R' \subseteq \Cs}$, $\OT(\mathcal R)$ is closed, as an intersection of preimages of closed sets under continuous maps, and ${\OT(\mathcal R)\subseteq\OT(\mathcal R')}$ by \eqref{eq:setting:opset}.

\section{Certified Terminal Sets}
\label{sec:growth}
To enlarge the operating set with a fixed prediction horizon, we construct a terminal set by adding finite collections of balls to $S_0$. We derive conditions for forward invariance and finite time arrival in $S_0$, and then construct collections satisfying these conditions along backup arcs.

\subsection{Links and Certificates}

\begin{figure}[t]
\centering
\setlength{\abovecaptionskip}{0.5pt}
\includegraphics[width=\linewidth]{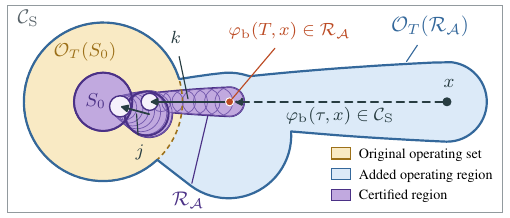}
\caption{Growth of the certified region \(\mathcal R_{\mathcal A}\) by links \(j\) and \(k\). Their final balls (hollow circles) lie inside \(S_0\) and a ball of \(j\), respectively. The blue boundary encloses the enlarged operating set \(\mathcal O_T(\mathcal R_{\mathcal A})\). The dashed rollout carries \(x\) into \(\mathcal R_{\mathcal A}\) within the prediction horizon \(T\).}
\label{fig:schematic}
\vspace{-5mm}
\end{figure}

A \textit{link} is a
finite ordered sequence of stored balls. Under the backup controller, each ball is transferred to the next, and the last ball is contained in the base set or in a ball of an earlier link (see Fig.~\ref{fig:schematic}). These connections allow new links to extend a previously constructed terminal set.

A \emph{family} $\cA$ consists of ${\nlink \geq 0}$ links, indexed by
${k= 1, \dots, \nlink}$ in the order they are appended. Link $k$ contains $N_{k}+1$ stored balls, with $N_{k} \geq 1$.
We index the base set $S_0$ by $0$ and the ball at position $l$ of link $k$ by the pair $(k,l)$, ${l = 0,\dots, N_k}$. We write $a$ for an index, and for ${a \neq 0}$ the \emph{stored ball} is $S_a\defeq B(c_a,R_a)$ with $R_a>0$. The index set is $\cA\defeq \{0\} \cup \{(k,l) : k=1,\dots,\nlink,\ l=0,\dots,N_k\}$. The base set and the balls of links with smaller indices are called \emph{older} than the balls of link $k$.
The \emph{successor} map $p(a)$, $p:\cA\setminus\{0\}\to\cA$, carries each \emph{flow step} $(k,l)$, $l < N_k$, to $(k,l+1)$ with duration $\vartheta_{(k,l)}>0$, the durations of link $k$ summing to $L_k$. It carries the \emph{dock} $(k,N_k)$ to $0$ or to a ball of an older link, so successor chains reach $0$ after at most $\nlink$ docks. A dock $a$ with $p(a)=0$ is a \emph{base dock}. The first ball $(k,0)$ of link $k$ is its \emph{entrance}.

The \emph{certified region} associated with $\cA$ (see Fig.~\ref{fig:schematic}) is ${\RA\defeq S_0\cup\bigcup_{a\ne0}S_a}$. To use this union as a terminal set, the balls must lie in $\Cs$, and the backup controller must keep trajectories in $\RA$ while transferring them to their successors.

\begin{definition}
\label{def:semfam}
A family $\cA$ is a \emph{ball family} if it satisfies the following conditions. \\ \emph{Safety:} Every stored ball lies in $\Cs$. \\ \emph{Arrival:} Every flow step $a$ satisfies $\phib(\vartheta_a,S_a)\subseteq S_{p(a)}$. \\ \emph{Exposure:} For each flow step $a$, with chord $d_a \defeq c_{p(a)} - c_a$, the successor radius satisfies
\begin{equation}
\label{eq:succ:exposure}
R_{p(a)}^{2}\ge(1+2\mu_a\vartheta_a)R_a^{2}+\|d_a\|^{2}+2R_aE_a
\end{equation}
with $E_a\defeq\|d_a-\vartheta_a\vb(c_a)\|$, for some thickness $\eta_a > 0$ and a one-sided Lipschitz constant $\mu_a\in\R$ satisfying\begin{equation}\label{eq:succ:osl}
    \begin{aligned}
        (x-c_a)^{\top}\mathbf{P}&\bigl(\vb(x)-\vb(c_a)\bigr)\le\mu_a\|x-c_a\|^{2}\\
        & \forall x: \|x-c_a\|<R_a+\eta_a .
    \end{aligned}
\end{equation}
\emph{Docks:} Every dock $a$ with $p(a)\ne0$ satisfies $S_a \subseteq S_{p(a)}$, with
equality allowed.
Every base dock satisfies, for some ${\eta_a > 0}$, ${R_a^2 - \|x - c_a\|^2 \leq R_a b_0(x)}$ for all $x$ with ${\|x - c_a\| < R_a + \eta_a}$. \emph{Horizon:} The total duration of link $k$ satisfies ${L_k \in (0,T]}$.
\end{definition}

The arrival condition specifies the transfer at time $\vartheta_a$. The exposure condition prevents outward flow through the portion of ${\partial S_a}$ outside $\operatorname{int}S_{p(a)}$, keeping trajectories in $\RA$ during the transfer.
A negative $\mu_a$ reduces the required successor radius, so a link may narrow where the flow contracts.
Near an equilibrium with a Hurwitz linearization $\mathbf{A}$, a matrix $\mathbf{P}$ satisfying ${\mathbf{A}^{\top}\mathbf{P}+\mathbf{P}\mathbf{A}\prec0}$ exists and gives a negative $\mu_a$ for balls in a neighborhood of the equilibrium.
For ${\mathbf{P}=\mathbf{I}}$ this condition becomes ${\mathbf{A} + \mathbf{A}^{\top} \prec 0}$. We therefore choose $\mathbf{P}$ together with the backup controller.
This choice gives a constant contraction metric in that neighborhood \cite{aminzare2014}.

For $c\in\R^n$ and $r>0$ the \emph{ball function}
\begin{equation*}
\beta_{c,r}(x)\defeq {\big(r^{2}-\|x-c\|^{2} \big)}/{r}
\end{equation*}
is nonnegative on $B(c,r)$, and is $C^1$.
For a family $\cA$, set $\beta_0\defeq b_0$ and $\beta_a\defeq\beta_{c_a,R_a}$ for $a\ne0$. The \emph{terminal envelope} is ${\HA(x)\defeq\max_{a\in\cA}\beta_{a}(x)}$,
and, as $\beta_a\ge0$ on $S_a$, the certified region is
\begin{equation}\label{eq:env:level}
\RA=\{x\in\R^n : \HA(x)\ge0\} .
\end{equation}
The \emph{active set} $\MA(x)\defeq\{a\in\cA : \beta_{a}(x)=\HA(x)\}$ is nonempty for all ${x \in \R^n}$, as $\cA$ is finite and contains $0$. The envelope $\HA$ is continuous as a maximum of finitely many $C^1$ functions. We compare the ball functions of nested balls.
\begin{lemma}
\label{lem:envelope}
If ${B(c_1, r_1) \subseteq B(c_2, r_2)}$ with ${r_1, r_2 > 0}$, then
\begin{equation}\label{eq:env:nest}
    \beta_{c_{2},r_{2}}(x)\ge\beta_{c_{1},r_{1}}(x)
    \qquad\forall x\in\R^n.
\end{equation}
\end{lemma}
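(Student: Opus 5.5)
The plan is to reduce the inclusion $B(c_1,r_1)\subseteq B(c_2,r_2)$ to a scalar inequality between the center offset and the radii, and then verify \eqref{eq:env:nest} as a nonnegative quadratic inequality in $\|x-c_1\|$. The norm $\|\cdot\|=\|\cdot\|_{\mathbf P}$ comes from the inner product $\langle u,v\rangle\defeq u^{\top}\mathbf P v$. So the Euclidean tools we need, namely expansion of squares and Cauchy--Schwarz, remain available.

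\emph{Step 1 (geometry of nesting).} Set $d\defeq c_1-c_2$ and $\delta\defeq\|d\|$. We claim that $\delta\le r_2-r_1$. If $\delta=0$, test the inclusion at a point $c_1+r_1u$ with $\|u\|=1$; this gives $r_1\le r_2$. If $\delta>0$, take $u\defeq d/\delta$. The point $c_1+r_1u$ lies in $B(c_1,r_1)$, so it lies in $B(c_2,r_2)$. Its distance to $c_2$ is $\|d+r_1u\|=\delta+r_1$, hence $\delta+r_1\le r_2$.

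\emph{Step 2 (reduction to a quadratic).} Write $y\defeq x-c_1$, so that $\|x-c_2\|^2=\|y\|^2+2\langle y,d\rangle+\delta^2$. Multiplying \eqref{eq:env:nest} by $r_1r_2>0$ and rearranging, the inequality is equivalent to
\begin{equation*}
(r_2-r_1)\|y\|^2-2r_1\langle y,d\rangle-r_1\delta^2+r_1r_2(r_2-r_1)\ge0 .
\end{equation*}
Cauchy--Schwarz gives $-2r_1\langle y,d\rangle\ge-2r_1\delta s$ with $s\defeq\|y\|\ge0$. It therefore suffices to show that
\begin{equation*}
q(s)\defeq(r_2-r_1)s^2-2r_1\delta s+r_1r_2(r_2-r_1)-r_1\delta^2\ge0
\end{equation*}
for all $s\ge0$.

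\emph{Step 3 (sign of the quadratic).}
\begin{itemize}
\item If $r_1=r_2$, Step 1 forces $\delta=0$ and $q\equiv0$, so the claim holds with equality.
\item If $r_2>r_1$, the leading coefficient of $q$ is positive. A short computation gives the quarter discriminant
\begin{equation*}
r_1^2\delta^2-(r_2-r_1)\bigl(r_1r_2(r_2-r_1)-r_1\delta^2\bigr)=r_1r_2\bigl(\delta^2-(r_2-r_1)^2\bigr).
\end{equation*}
By Step 1 this is $\le0$, so $q\ge0$ on all of $\R$.
\end{itemize}
This proves \eqref{eq:env:nest}.

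The only step requiring care is the discriminant simplification in Step 3, together with noticing that Step 1's bound $\delta\le r_2-r_1$ is exactly what makes the discriminant nonpositive. The remaining steps are routine algebra.
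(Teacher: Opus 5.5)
Your proof is correct and follows the paper's route: the same extremal point $c_1+r_1 d/\delta$ gives $\delta=\|c_1-c_2\|\le r_2-r_1$, and your Cauchy--Schwarz bound on the expanded square is just the triangle inequality $\|x-c_2\|\le\|x-c_1\|+\delta$. The only difference is cosmetic. The paper substitutes $\delta\le r_2-r_1$ before expanding, so your discriminant computation and the case split $r_1=r_2$ are replaced by the explicit bound $\beta_{c_2,r_2}(x)-\beta_{c_1,r_1}(x)\ge (r_2-r_1)(r_1-\|x-c_1\|)^2/(r_1r_2)$.
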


\begin{proof}
The ball $B(c_1,r_1)$ contains a point at distance ${\|c_1 - c_2\| + r_1}$ from $c_2$, hence ${\|c_1 - c_2\| \leq r_2 - r_1}$ and ${\|x - c_2\| \leq \|x - c_1\| + r_2 - r_1}$. Expanding the squares gives ${\beta_{c_2,r_2}(x) -\beta_{c_1,r_1}(x) \geq \frac{(r_2-r_1)(r_1-\|x-c_1\|)^2}{r_1r_2} \geq 0}$.
\end{proof}

Lemma~\ref{lem:envelope} is used at the docks whose successor is a stored ball. Such a dock lies inside its successor, whose ball function is at least as large everywhere, so the successor can be used in place of the dock when we show that backup trajectories do not leave the certified region. This is why Definition~\ref{def:semfam} requires no exposure condition of these docks.

\subsection{Invariance, Chain Time and Append Monotonicity}

For a family $\cA$ and ${a \in \cA}$, the \emph{chain time} $\Tch(a)$ is the sum of the flow step durations along its successor chain to $0$, given by ${\Tch(0) = 0}$ and ${\Tch(a) = \vartheta_a + \Tch(p(a))}$ for ${a \neq 0}$, with ${\vartheta_a = 0}$ at a dock. The recursion terminates because successor chains reach $0$.

An \emph{append extension} $\cA\appext\cA'$ of a family $\cA$ with $\nlink$ links is a ball family $\cA'$ whose first links $1,\dots,\nlink$ are those of $\cA$, each with the flow step count, centers, radii, durations, and successor it has in $\cA$. This requirement gives $\cA\subseteq\cA'$ as index sets with the same balls on $\cA$.

To establish forward invariance of $\RA$, we apply Lemma~\ref{lem:nocross} to the functions $\beta_a$ along the backup flow.
For a ball family, we choose $(\mu_a,\eta_a)$
for each flow step $a$ to satisfy the exposure condition, and $\eta_a$ for each base dock $a$ to satisfy the docks condition (see Definition~\ref{def:semfam}). We derive the bound that Lemma~\ref{lem:nocross} requires from these data through the exposure inequalities \eqref{eq:succ:exposure} of the flow steps. We define
\begin{subequations}
\begin{align}\label{eq:collar:const}
    \ell_a &\defeq 2\mu_a -({R_{p(a)}-R_a})/({R_a\vartheta_a})+ {E_a}/({R_a\vartheta_a}),\\
    \lamA&\defeq\max\bigl(\{\lambda_0\}\cup\{|\ell_a|\}\bigr),\label{eq:collar:lam}\\ \label{eq:collar:eps}
    \epsA&\defeq\min\bigl(\{\varepsilon_0\}\cup
    \{2\eta_a+\eta_a^2/R_a\}\bigr),
\end{align}
\end{subequations}
where $a$ ranges over the flow steps in $\ell_a$ and $\lamA$ and over the flow steps and the base docks in $\epsA$. The docks condition (see Definition~\ref{def:semfam}) requires ${b_0 \geq \beta_a}$ near a base dock $a$, which ${S_a \subseteq S_0}$ and \eqref{eq:env:nest} give if $b_0$ is a ball function.

\begin{lemma}\label{lem:collar}
Suppose $S_0$ is a base set and $\cA$ is a ball family. Then, $\epsA>0$ and $\nabla\beta_a(x)\,\vb(x)\ge-\lamA\,|\HA(x)|$ for some $a\in\MA(x)$ if $-\epsA<\HA(x)<\varepsilon_0$.
\end{lemma}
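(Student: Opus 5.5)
The constant $\epsA$ is a minimum over finitely many numbers: $\varepsilon_0$ and the quantities $2\eta_a+\eta_a^2/R_a$. Each of these is strictly positive, so $\epsA>0$ is immediate. For the derivative bound I fix $x$ with $-\epsA<\HA(x)<\varepsilon_0$. I then reduce to an active index that is either $0$ or a flow step, and treat these two cases separately.

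\emph{Reduction.} Take any $a\in\MA(x)$.

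If $a$ is a dock with $p(a)\neq0$, the docks condition gives $S_a\subseteq S_{p(a)}$. Lemma~\ref{lem:envelope} then gives $\beta_{p(a)}(x)\ge\beta_a(x)=\HA(x)$, so $p(a)\in\MA(x)$ as well. Iterating along the successor chain, which is finite and ends at $0$, I may assume $a$ is $0$, a flow step, or a base dock.

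If $a$ is a base dock, then $\beta_a(x)=\HA(x)>-\epsA\ge-(2\eta_a+\eta_a^2/R_a)$. This rearranges to $\|x-c_a\|^2<R_a^2+2\eta_aR_a+\eta_a^2=(R_a+\eta_a)^2$. So $x$ lies in the region where the docks condition gives $b_0(x)\ge\beta_a(x)=\HA(x)$. Hence $0\in\MA(x)$, and I switch to $a=0$.

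The same computation shows $\|x-c_a\|<R_a+\eta_a$ whenever $a$ is an active flow step. This is exactly the region where \eqref{eq:succ:osl} is available.

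\emph{Case $a=0$.} Here $b_0(x)=\HA(x)$, and $-\varepsilon_0\le-\epsA<b_0(x)<\varepsilon_0$, so $|b_0(x)|<\varepsilon_0$. Then \eqref{eq:setting:basedrift} gives $\nabla b_0\,\vb\ge-\lambda_0|b_0|\ge-\lamA|\HA|$.

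\emph{Case $a$ a flow step.} Write $e=x-c_a$ and $s=\|e\|$, so that $s^2=R_a^2-R_a\HA(x)$. I compute
\[
\nabla\beta_a(x)\vb(x)=-\tfrac{2}{R_a}\bigl[e^{\top}\mathbf P(\vb(x)-\vb(c_a))+e^{\top}\mathbf P\vb(c_a)\bigr].
\]
The first bracket term is bounded by $\mu_a s^2$ via \eqref{eq:succ:osl}. For the second I split $\vartheta_a\vb(c_a)=d_a-(d_a-\vartheta_a\vb(c_a))$. Cauchy--Schwarz in the $\mathbf P$-norm then gives
\[
e^{\top}\mathbf P\vb(c_a)\le\bigl(e^{\top}\mathbf P d_a+sE_a\bigr)/\vartheta_a .
\]
The term $e^{\top}\mathbf P d_a$ is controlled by activeness, $\beta_{p(a)}(x)\le\beta_a(x)$. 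Expanding $\|x-c_{p(a)}\|^2=s^2-2e^{\top}\mathbf Pd_a+\|d_a\|^2$ turns this into an upper bound on $2e^{\top}\mathbf Pd_a$ in terms of $s^2$, $\|d_a\|^2$, $R_a$, $R_{p(a)}$ and $\HA$.

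Substituting everything and using $s^2=R_a^2-R_a\HA$ and $s\le R_a+|\HA|$-type estimates, I expect the $\HA$-independent part to be exactly $\tfrac{1}{R_a\vartheta_a}$ times $R_{p(a)}^2-(1+2\mu_a\vartheta_a)R_a^2-\|d_a\|^2-2R_aE_a$, up to sign. That part is nonnegative by the exposure inequality \eqref{eq:succ:exposure}. The remaining part should be linear in $\HA(x)$ with coefficient $\ell_a$ from \eqref{eq:collar:const}, giving $\nabla\beta_a\vb\ge\ell_a\HA\ge-|\ell_a||\HA|\ge-\lamA|\HA|$.

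The main obstacle is this last algebraic bookkeeping. The terms $2sE_a$ and the activeness inequality involve $s$ rather than $s^2$, so one must linearize them carefully around $s=R_a$. If the exact linear coefficient does not come out as $\ell_a$ directly, I would fall back to bounding $s\le R_a$ when $\HA\ge0$ and $s\le R_a-\HA/2\cdot(\ldots)$ otherwise, absorbing the discrepancy into the sign. The identification of the coefficient with $\ell_a$ (the $(R_{p(a)}-R_a)$ and $E_a$ terms) indicates that the intended route is precisely this linearization.
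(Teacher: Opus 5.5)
Your reduction and your flow-step decomposition follow the paper's proof. A dock with a stored successor passes activity to $p(a)$ by Lemma~\ref{lem:envelope}. An active base dock forces $0\in\MA(x)$, because $\HA(x)>-\epsA$ places $x$ within $R_a+\eta_a$ of $c_a$. The base case is \eqref{eq:setting:basedrift}. At a flow step you combine \eqref{eq:succ:osl}, the split of $\vartheta_a\vb(c_a)$ through $d_a$, Cauchy--Schwarz, and $\beta_{p(a)}(x)\le\HA(x)$. The gap is the one estimate you leave as an expectation. The substitutes you name for it do not produce the stated constant $\lamA$.

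\textbf{What the missing step needs.} Only the term $-2sE_a$ involves $s$ itself. After substituting $s^2=R_a^2-R_a\HA(x)$, the activeness inequality gives exactly $-2e^{\top}\mathbf{P}d_a\ge R_{p(a)}^2-R_a^2-\|d_a\|^2-(R_{p(a)}-R_a)\HA(x)$, with no linearization. The missing bound is $s\le R_a-\HA(x)/2$, which holds for either sign of $\HA(x)$, so no case split is needed. To see it, note that $s^2\ge0$ gives $\HA(x)\le R_a$, hence $R_a-\HA(x)/2>0$, and $(R_a-\HA(x)/2)^2=s^2+\HA(x)^2/4$. Equivalently, the tangent at $\zeta=0$ of the concave map $\zeta\mapsto(R_a^2-R_a\zeta)^{1/2}$ lies above the map.

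This bound gives $-2sE_a\ge-2R_aE_a+E_a\HA(x)$. Adding the exact one-sided Lipschitz term $-2\mu_a\vartheta_a(R_a^2-R_a\HA(x))$ yields $\vartheta_aR_a\nabla\beta_a(x)\vb(x)\ge\vartheta_aR_a\ell_a\HA(x)+R_{p(a)}^2-(1+2\mu_a\vartheta_a)R_a^2-\|d_a\|^2-2R_aE_a$. The $+E_a\HA(x)$ here is precisely the $E_a/(R_a\vartheta_a)$ term of \eqref{eq:collar:const}.

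\textbf{Why your alternatives fall short.} The bounds $s\le R_a$ for $\HA(x)\ge0$, or $s\le R_a+|\HA(x)|$, each lose at least $E_a|\HA(x)|$ on part of the range. This loss lowers the lower bound, so it cannot be absorbed into the sign. Since \eqref{eq:succ:exposure} may hold with equality, nothing else absorbs it either. For example, take $\ell_a<0$, $E_a>0$, $\lamA=|\ell_a|$ and $\HA(x)>0$. The best these estimates give is $\nabla\beta_a(x)\vb(x)\ge-(|\ell_a|+E_a/(R_a\vartheta_a))\HA(x)$, which is weaker than required.

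A larger constant would still serve Theorems~\ref{thm:growth} and~\ref{thm:invariance}. It does not, however, prove the lemma with $\lamA$ as defined in \eqref{eq:collar:lam}.
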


\begin{proof}
See Appendix~\ref{app:collar}.
\end{proof}

Having bounded the decrease of the terminal envelope near zero, we now establish the invariance and growth properties of the certified region.

\begin{theorem}\label{thm:growth}
Consider the system \eqref{eq:system}. Suppose $S_0$ is a base set and $\cA$ is a ball family with $\nlink$ links. Then, $\RA$ is a nonempty compact subset of $\Cs$, and $\kb$ renders it forward invariant. Consequently, ${\RA \subseteq \OT(\RA)}$. In addition, for any ${a \in \cA}$ and ${x \in S_a}$, ${\phib(t,x) \in S_0}$,  ${\forall t \geq \Tch(a)}$, where ${\Tch(a) \leq \nlink\,T}$.  Moreover, every append extension ${\cA \appext \cA'}$ satisfies ${\RA \subseteq \mathcal R_{\cA'}}$ and ${\OT(\RA) \subseteq \OT(\mathcal R_{\cA'})}$.
\end{theorem}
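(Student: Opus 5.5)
We prove the claims in the order stated, and the forward invariance claim does the real work.

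\emph{Basic properties.} $\RA$ is a finite union of compact sets: $S_0$ and the closed balls $S_a$. It is therefore compact, and it is nonempty because $S_0\neq\emptyset$. It lies in $\Cs$ by Definition~\ref{def:base} and the safety condition of Definition~\ref{def:semfam}.

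\emph{Forward invariance.} Fix $x\in\RA$, so $\HA(x)\ge0$ by \eqref{eq:env:level}, and fix $t_1>0$. Apply Lemma~\ref{lem:nocross} on $[0,t_1]$ to the finitely many $C^1$ functions $w_a(t)\defeq\beta_a(\phib(t,x))$, $a\in\cA$. Their maximum is $w(t)=\HA(\phib(t,x))$, and we take $\varepsilon\defeq\min(\epsA,\varepsilon_0)>0$, positive by Lemma~\ref{lem:collar}, and $\lambda\defeq\lamA$.
\begin{itemize}
\item Whenever $w(t)\in(-\varepsilon,0]$, we have $-\epsA<\HA(\phib(t,x))<\varepsilon_0$.
\item Lemma~\ref{lem:collar} then gives an active index $a\in\MA(\phib(t,x))$ with $\dot w_a(t)=\nabla\beta_a\,\vb\ge-\lamA|w(t)|=\lamA w(t)$, since $w(t)\le0$.
\end{itemize}
This is exactly hypothesis \eqref{eq:nocross:diff}, so $w\ge0$ on $[0,t_1]$. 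Since $t_1$ is arbitrary, $\RA$ is forward invariant. The inclusion $\RA\subseteq\OT(\RA)$ follows directly: for $x\in\RA$, the rollout satisfies $\phib(\tau,x)\in\RA\subseteq\Cs$ for all $\tau\in[0,T]$ and $\phib(T,x)\in\RA$.

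\emph{Chain time.} We argue by induction along successor chains, ordered by chain length to $0$, a recursion that terminates as noted in the paper.
\begin{itemize}
\item For $a=0$, the claim is invariance of $S_0$, which follows from Lemma~\ref{lem:nocross} applied to $b_0$ as stated after Definition~\ref{def:base}.
\item For a flow step $a$ and $x\in S_a$, arrival gives $\phib(\vartheta_a,x)\in S_{p(a)}$. The induction hypothesis and \eqref{eq:setting:semigroup} then give $\phib(t,x)\in S_0$ for $t\ge\vartheta_a+\Tch(p(a))=\Tch(a)$.
\item For a dock with $p(a)\ne0$, use $S_a\subseteq S_{p(a)}$ and $\vartheta_a=0$.
\item For a base dock, the docks condition gives $b_0\ge\beta_a\ge0$ on $S_a$, so $S_a\subseteq S_0$ and $\Tch(a)=0$.
\end{itemize}
For the bound, note that a chain starting in link $k$ traverses the remaining flow steps of link $k$, with total duration at most $L_k\le T$ by the horizon condition. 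Its dock then jumps to $0$ or to a strictly older link. Hence at most $\nlink$ links are visited, each contributing at most $T$, and $\Tch(a)\le\nlink T$.

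\emph{Append monotonicity.} An append extension has $\cA\subseteq\cA'$ with the same balls, so $\RA\subseteq\mathcal R_{\cA'}$. Both regions are closed subsets of $\Cs$, so the monotonicity of $\OT$ noted after Definition~\ref{def:opset} gives $\OT(\RA)\subseteq\OT(\mathcal R_{\cA'})$.

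The main obstacle is the invariance step, specifically aligning the "for some active index" conclusion of Lemma~\ref{lem:collar} with the hypothesis of Lemma~\ref{lem:nocross}. This hinges on the range restriction $w<\varepsilon_0$ being implied by $w\le0$, which holds because $\varepsilon_0>0$. It also hinges on the sign identity $-\lamA|w|=\lamA w$ for $w\le0$. The remaining delicate point is that the $\beta_a$ are $C^1$ globally, which holds because each $R_a>0$.
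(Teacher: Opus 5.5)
Your proof is correct and follows the paper's own argument essentially step by step. Like the paper, you apply Lemma~\ref{lem:nocross} to $w_a(t)=\beta_a(\phib(t,x))$ using the bound from Lemma~\ref{lem:collar}, induct along successor chains via the arrival, docks and semigroup properties, and obtain append monotonicity from monotonicity of $\HA$ under appends. The only cosmetic difference is your choice $\varepsilon=\min(\epsA,\varepsilon_0)$, which is just $\epsA$ because \eqref{eq:collar:eps} already gives $\epsA\le\varepsilon_0$.
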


\begin{proof}
See Appendix~\ref{app:growth}.
\end{proof}
By Theorem~\ref{thm:growth}, a backup trajectory that enters a stored ball remains in $\RA$ and reaches the base set within the chain time of that ball, which can exceed the prediction horizon. It remains to construct links that extend a ball family.

\subsection{Certified Links along Backup Arcs}

To extend a family $\cA_{\mathrm{old}}$, we take a backup arc ${c : [0,L] \to \R^n}$, ${L > 0}$, and divide $[0,L]$ into ${n_{\mathrm{s}} \defeq \lceil L/T\rceil}$ equal segments. For a radius profile ${r : [0,L] \to (0,\infty)}$ and an integer ${N \geq 1}$, we place balls $B(c(s),r(s))$ at the $N+1$ equally spaced values of $s$ in each segment, including both endpoints. The balls of each segment, ordered by increasing $s$, form a link of $N$ flow steps of duration $L/(n_{\mathrm{s}}N)$. The links are appended in order of decreasing $s$. The dock at $c(L)$ has successor ${a_L \in \cA_{\mathrm{old}}}$. Each remaining dock coincides with the entrance of the next link along the arc, and that entrance is its successor.

\begin{theorem}\label{thm:link}
Consider the system \eqref{eq:system}. Suppose $S_0$ is a base set, $\cA_{\mathrm{old}}$ is a ball family, ${c([0,L]) \subseteq \operatorname{int} \Cs}$ and ${c(L) \in \operatorname{int} S_{a_L}}$. Then, there exist a radius profile $r$ and an integer ${N_0 \geq 1}$ such that, for every ${N \geq N_0}$, appending the links placed along $c$ to $\cA_{\mathrm{old}}$ gives an append extension of $\cA_{\mathrm{old}}$. Moreover, this append extension strictly enlarges the certified region if ${c(0) \notin \mathcal R_{\cA_{\mathrm{old}}}}$.
\end{theorem}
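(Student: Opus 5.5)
The plan is to choose an exponentially growing radius profile $r(s)=\rho\,e^{\gamma s}$, with $\rho>0$ small and $\gamma>0$ a fixed rate, and then verify each condition of Definition~\ref{def:semfam} for the new links. The append conditions then follow by construction. The first links keep their data, the new docks point either to older links or to the entrance of the next link along the arc (appended earlier, since links are appended in order of decreasing $s$), and successor chains still reach $0$. The horizon condition holds because each segment has length $L/n_{\mathrm s}\in(0,T]$.

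Fix a compact tube around $c([0,L])$. Since $c([0,L])\subseteq\operatorname{int}\Cs$ is compact, there is $\delta>0$ with $B(c(s),\delta)\subseteq\Cs$ for all $s$. The map $\vb$ is $C^1$, so on the $2\delta$-neighbourhood of the arc it has a Lipschitz constant $\mu$ in $\|\cdot\|$. This $\mu$ serves as a uniform one-sided Lipschitz constant $\mu_a=\mu$ in \eqref{eq:succ:osl}, with $\eta_a=\delta$ whenever $R_a\le\delta$.

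Set $\gamma\defeq\mu+1$. Choose $\rho$ so that $\rho e^{\gamma L}\le\delta$; this gives the safety condition. Tighten $\rho$ further so that the dock at $c(L)$ satisfies $B(c(L),r(L))\subseteq\operatorname{int}S_{a_L}$, which is possible since $c(L)\in\operatorname{int}S_{a_L}$.

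If $a_L\ne0$, the docks condition is immediate. If $a_L=0$, we need $R^2-\|x-c\|^2\le R\,b_0(x)$ near the ball. Inside a small ball around an interior point we have $b_0\ge\kappa>0$. Outside the ball the left side is negative, but on the annulus $\|x-c\|<R+\eta$ we must make sure $b_0$ is not too negative. Taking $\eta$ small so that the annulus stays in $\{b_0>0\}$ makes the inequality hold trivially there. Inside the ball, $R^2-\|x-c\|^2\le R^2\le R\kappa$ once $R\le\kappa$.

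Each interior dock coincides exactly with the next entrance, giving equality $S_a=S_{p(a)}$, which is allowed.

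It remains to handle the flow steps, with step $\vartheta=L/(n_{\mathrm s}N)$. Here $d_a=c(s+\vartheta)-c(s)$, and since $c$ is a backup arc, $E_a=\|d_a-\vartheta\vb(c(s))\|\le C\vartheta^2$ by $C^1$-smoothness of $c$. Also $\|d_a\|^2\le C'\vartheta^2$. The exposure requirement reads
\begin{equation*}
\rho^2e^{2\gamma(s+\vartheta)}\ge(1+2\mu\vartheta)\rho^2e^{2\gamma s}+C'\vartheta^2+2\rho e^{\gamma s}C\vartheta^2 .
\end{equation*}
Since $e^{2\gamma\vartheta}\ge1+2\gamma\vartheta$, the gap between the left side and the first term on the right is at least $2\rho^2(\gamma-\mu)\vartheta=2\rho^2\vartheta$. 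This dominates the $O(\vartheta^2)$ terms once $N\ge N_0$.

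Arrival, $\phib(\vartheta,S_a)\subseteq S_{p(a)}$, follows from a Gr\"onwall estimate. It gives $\|\phib(\vartheta,x)-c(s+\vartheta)\|\le e^{\mu\vartheta}R_a$, and $e^{\mu\vartheta}R_a<e^{\gamma\vartheta}R_a=R_{p(a)}$ provided the trajectory stays in the tube. That holds because $R\le\delta$ and $\vartheta$ is small. Alternatively, arrival can be derived from exposure itself via a comparison argument like Lemma~\ref{lem:nocross}.

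For strict enlargement, $c(0)$ is the center of the new entrance ball and so lies in the new certified region. Since $c(0)\notin\mathcal R_{\cA_{\mathrm{old}}}$, the region strictly grows. Combined with $\mathcal R_{\cA_{\mathrm{old}}}\subseteq\mathcal R_{\cA'}$ from Theorem~\ref{thm:growth}, the inclusion is strict.

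The main obstacle is the exposure inequality. Its $\|d_a\|^2$ term is of order $\vartheta^2$ but is not multiplied by $R_a$, so the radius growth must produce a first-order gain $\propto\rho^2\vartheta$. This forces the growth rate $\gamma>\mu$ and the requirement of large $N$. The uniform constants must be fixed before $\rho$, and $\rho$ before $N_0$.
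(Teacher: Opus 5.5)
Your proof is correct and follows essentially the same route as the paper. Both use an exponentially growing radius profile with rate exceeding a (one-sided) Lipschitz constant of $\vb$ near the arc, a Gr\"onwall-type estimate for arrival, and a first-order radius gain $\propto r(0)^2\vartheta$ that dominates the $O(\vartheta^2)$ terms $\|d_a\|^2+2R_aE_a$ in the exposure inequality once $N$ is large. The differences are minor: the paper uses the logarithmic norm over the convex hull of the tube, which may be negative and so allows narrowing links, where you use a nonnegative Lipschitz constant; and $E_a=O(\vartheta^2)$ follows from $c$ being $C^2$ (since $\dot c=\vb(c)$ with $\vb\in C^1$), not merely from $c$ being $C^1$.
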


\begin{proof}
See Appendix~\ref{app:link}.
\end{proof}
Theorem~\ref{thm:link} establishes the existence of an append extension along a given backup arc. For safe controller synthesis, we next formulate a backup CBF safety filter with the resulting certified region as its terminal set.

\section{Continuous-Time Safety Filter}
\label{sec:filter}
We formulate the safety filter as a semi-infinite quadratic program (QP), then establish its feasibility, operating set invariance, and the existence of global closed loop solutions. The safety filter projects the nominal input onto constraints that preserve the operating set.

For a ball family $\cA$, Theorem~\ref{thm:growth} allows the safety filter to use the terminal condition $\phib(T,x)\in\RA$ in place of $\phib(T,x)\in S_0$. With $x_T\defeq\phib(T,x)$, the \emph{operating margin} is
\begin{equation}
\label{eq:filter:margin}
m_{\cA}(x)\defeq\min \Big\{\HA(x_T),\ \min_{\substack{i=1,\dots,\nh\\
\tau\in[0,T]}}h_i\bigl(\phib(\tau,x)\bigr) \Big\}.
\end{equation}
For ${\cA = \{0\}}$, $m_{\cA}$ is the backup CBF of \cite{chen2021}.

We use $s^{+}\defeq\max\{s,0\}$ and denote by $\KA(x)$ the set of $u\in\U$ satisfying the rollout constraints
\begin{equation}
\label{eq:filter:roll}
\begin{aligned}
    &\!\!\nabla h_i(\phib(\tau,x))\Phib(\tau,x)g(x)\bigl(u-\kb(x)\bigr)\\
    &\!\!\geq-\alpha_i\bigl(h_i(\phib(\tau,x))^{+}\bigr), \ \forall i\in\{1,\dots,\nh\},\ \forall \tau \in [0, T],
\end{aligned}
\end{equation}
and the terminal constraints
\begin{equation}\label{eq:filter:term}
\begin{aligned}
    \nabla\beta_a(x_T)&\Phib(T,x)g(x)\bigl(u-\kb(x)\bigr)\\
    &\ge-\albhat_a\bigl(\HA(x_T)^{+}\bigr)-\rho\bigl(\HA(x_T)-\beta_a(x_T)\bigr)
\end{aligned}
\end{equation}
for all $a\in\cA$, where $\albhat_0 \defeq \alpha_0$ and $\albhat_a(\zeta) \defeq \alb(R_a\zeta)/R_a$ for each stored ball $a$ of radius $R_a$. The gains $\alpha_i$, $\alb$ and $\alpha_0$ are fixed independently of the family and are locally Lipschitz extended class-$\mathcal K$ functions \cite{ames2017}. The \emph{gap relaxation} $\rho:\Rge \to \Rge$ is continuous with $\rho(0) = 0$.

For a continuous nominal controller $\kp:\Rge\times\R^n\to\R^m$ and a weight matrix $\WQ=\WQ^\top\succ0$, the \emph{safety filter} of $\cA$ is
\begin{equation}
\label{eq:filter:qp}
\begin{aligned}
    \kQ(t,x)  \defeq  \argmin_{u\in\U}\ &
    \bigl\|u-\kp(t,x)\bigr\|_{\WQ}^{2} \\
    \text{s.t.} \ \ &\eqref{eq:filter:roll},\ \ \forall i,\ \forall\tau\in[0,T],\\ & \eqref{eq:filter:term},\ \ \forall a\in\cA .
\end{aligned}
\end{equation}
The cost function of \eqref{eq:filter:qp} is quadratic, and its constraints are affine in $u$. Since $\U$ is a polytope, \eqref{eq:filter:qp} is a semi-infinite QP.
Every constraint depends on the input $u$ only through $g(x)\bigl(u-\kb(x)\bigr)$, so its left side is zero at $u=\kb(x)$. Its right side is nonpositive at every $x\in\R^n$ since the gains have the clipped arguments $h_i(\phib(\tau,x))^{+}$ and $\HA(x_T)^{+}$, equal to the unclipped ones on $\OT(\RA)$, and the nonnegative function $\rho$ has the \emph{envelope gap} $\HA(x_T)-\beta_a(x_T)$ as its argument. Hence $\kb(x)\in\KA(x)$ for all $x\in\R^n$, and $\KA(x)$, the intersection of $\U$ with closed half spaces, is nonempty, compact, and convex. Thus, \eqref{eq:filter:qp} is feasible $\forall (t,x)$ and has a unique minimizer since ${\WQ \succ 0}$. We remark that the relaxation $\rho$ acts only on terminal constraints with $a\notin\MA(x_T)$. Clipping the gains and omitting the terminal drift $\nabla\beta_a(x_T)\vb(x_T)$ ensure feasibility at any state, extending the feasibility guarantee of \cite{chen2021}. The closed loop system under $\kQ$ is
\begin{equation}
\label{eq:inv:qploop}
\dot x(t)=f(x(t))+g(x(t))\,\kQ(t,x(t)),\qquad x(t_0)=x_0,
\end{equation}
with ${t_0 \geq 0}$ and ${x_0 \in \R^n}$.
A \emph{solution} on an interval ${I \subseteq \Rge}$ containing $t_0$ is a $C^1$ map ${x : I \to \R^n}$ satisfying \eqref{eq:inv:qploop} on $I$.

\begin{theorem}\label{thm:invariance}
Consider the system \eqref{eq:system}. Suppose $S_0$ is a base set and $\cA$ is a ball family. Then, $\kQ$ renders ${\OT(\RA) \subseteq \Cs}$ forward invariant.
Moreover, every solution of \eqref{eq:inv:qploop} on a bounded interval $[t_0,t_1)$ with ${m_{\cA}(x(t_0)) > 0}$ satisfies ${\inf_{t \in [t_0,t_1)} m_{\cA}(x(t)) > 0}$, and $\kQ$ is continuous at every $(t,x)$ with ${m_{\cA}(x) > 0}$. Finally, the closed loop has a solution from every $x_0$ with ${m_{\cA}(x_0) > 0}$, and every maximal solution from such an $x_0$ is defined on $[t_0,\infty)$.
\end{theorem}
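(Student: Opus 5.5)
The plan is to run everything through Lemma~\ref{lem:nocross}. Along a closed-loop solution $x(\cdot)$, I will apply it to the finitely many functions $w_a(t)\defeq\beta_a(\phib(T,x(t)))$ for $a\in\cA$, whose maximum is $\HA(x_T(t))$, and to $h_i(\phib(\tau,x(t)))$ for each fixed $\tau\in[0,T]$.

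\emph{Derivative identity.} Write $u=\kQ(t,x)$. The chain rule and \eqref{eq:setting:flowid} give
\begin{equation*}
\tfrac{d}{dt}\phib(\tau,x(t))=\Phib(\tau,x)\bigl(\vb(x)+g(x)(u-\kb(x))\bigr)=\vb(\phib(\tau,x))+\Phib(\tau,x)g(x)(u-\kb(x)).
\end{equation*}
The derivative of each composite function therefore splits into a drift term along the backup flow plus exactly the left side of \eqref{eq:filter:roll} or \eqref{eq:filter:term}.

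\emph{Rollout functions.} For $h_i$, the drift term is $\partial_\tau h_i(\phib(\tau,x))$. I plan to handle it by viewing $\psi(t,\tau)\defeq h_i(\phib(\tau,x(t)))$ along characteristics. Along $\tau-t=\text{const}$ the derivative is $\ge-\alpha_i(\psi^{+})$, and each characteristic starts at $\tau=0$ or at $t=t_0$, with nonnegative data there on $\OT(\RA)$. A comparison argument (Lemma~\ref{lem:nocross} with $\lambda$ equal to a local Lipschitz constant of $\alpha_i$) then keeps $\psi\ge0$.

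\emph{Terminal functions.} I consider an active $a\in\MA(x_T)$. Its gap term vanishes, and its drift term $\nabla\beta_a(x_T)\vb(x_T)$ is $\ge-\lamA|\HA(x_T)|$ by Lemma~\ref{lem:collar}, valid in the window $-\epsA<\HA<\varepsilon_0$. This yields $\dot w_a\ge-\lamA|\HA|-\albhat_a(\HA^{+})$. On $\HA\in(-\varepsilon,0]$ the $\albhat$ term vanishes and the bound is $\ge\lamA\HA$. Lemma~\ref{lem:nocross} with $\lambda=\lamA$ then gives $\HA(x_T(t))\ge0$. Combining both parts gives forward invariance of $\OT(\RA)$. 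The inclusion $\OT(\RA)\subseteq\Cs$ follows from $\tau=0$.

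\emph{Positive margin.} For $\inf m_{\cA}>0$ on bounded intervals, I will sharpen the same inequalities to a linear lower bound $\dot w\ge-Lw$ near small positive values. This uses local Lipschitz constants of $\alpha_i$ and $\albhat_a$ on the compact image of a bounded-interval solution. A Grönwall-type comparison then gives $m_{\cA}(x(t))\ge m_{\cA}(x(t_0))e^{-L(t-t_0)}$. The main obstacle is that $m_{\cA}$ is a min over a continuum in $\tau$ and a max over $a$, so it is not differentiable. I would handle each $(i,\tau)$ characteristic and the terminal envelope separately, using Lemma~\ref{lem:nocross} applied to $w-\delta e^{-L(t-t_0)}$, and then take infima uniformly.

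\emph{Continuity.} When $m_{\cA}(x)>0$, the rollout right sides are strictly negative, so $\kb(x)$ is a strictly feasible (Slater) point. Every constraint is continuous in $(x,u)$ uniformly over compact $\tau\in[0,T]$, since the flow is $C^1$. Inactive terminal constraints carry $\rho$ of the gap, which is continuous. For a strictly convex QP with a Slater point, this gives continuity of the feasible-set map (inner and outer semicontinuity), hence continuity of the minimizer by Berge's theorem. A semi-infinite detail remains: at $\tau$ where $\nabla h_i\Phib g=0$ the constraint is trivially satisfied, which is harmless.

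\emph{Existence.} Continuity of the right side on the open set $\{m_{\cA}>0\}$ gives Peano existence. For completeness, a maximal solution on $[t_0,t_1)$ with $t_1<\infty$ stays in $\OT(\RA)$ with margin bounded below, hence stays in a compact set. To see the compactness, note $x_T\in\RA$, which is compact by Theorem~\ref{thm:growth}. Backward flow over the time $T$ is continuous, so $x=\phib(-T,x_T)$ lies in a compact set. It follows that the solution extends, a contradiction.
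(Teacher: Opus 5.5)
\textbf{Genuine gap: the rollout characteristics run the wrong way.} Your terminal-envelope step (Lemma~\ref{lem:collar} at an active index, then Lemma~\ref{lem:nocross}), your Berge continuity argument and your Peano/compactness existence argument all follow the paper. The rollout step, however, has an error. From $\partial_t\psi=\partial_\tau\psi+\nabla h_i(x_\tau)\Phib(\tau,x)g(x)(u-\kb(x))$, the drift cancels along a curve $t\mapsto(t,\tau(t))$ only when $\dot\tau=-1$, that is, along $\tau+t=\mathrm{const}$. Along your lines $\tau-t=\mathrm{const}$ the derivative is $2\partial_\tau\psi+\nabla h_i(x_\tau)\Phib(\tau,x)g(x)(u-\kb(x))$, which \eqref{eq:filter:roll} does not control. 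On the correct characteristics $\tau$ decreases as $t$ increases, so each one enters $[t_0,\infty)\times[0,T]$ either at $t=t_0$ or at $\tau=T$, never at $\tau=0$. Your boundary data at $\tau=0$, $t>t_0$, is $h_i(x(t))$, which is exactly the quantity you are trying to show is nonnegative, so the argument as written is circular.

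\textbf{Missing idea: the terminal part feeds the rollout part.} The two parts cannot be combined after the fact; they must be done in order. First establish $\HA(x_T(t))\ge0$. Then use $x_T(t)\in\RA\subseteq\Cs$ (Theorem~\ref{thm:growth}) as the boundary value $h_i(\phib(T,x(t)))\ge0$ for characteristics entering at $\tau=T$. The paper does this by following $\tau=\tau^\ast+t^\ast-t$ backward from $(t^\ast,\tau^\ast)$ to $t_\natural=\max\{t_0,t^\ast-(T-\tau^\ast)\}$.

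The same coupling undermines your plan to treat the characteristics and the envelope separately for the positive-margin claim. There you need the boundary values $h_i(x_T(t))$ at $\tau=T$ to be bounded below by a positive constant. A Gr\"onwall inequality on $\HA$ cannot give this, since there is no quantitative relation between $h_i$ and the $\beta_a$ (balls may touch $\partial\Cs$). The paper obtains it by compactness: $D=\{\HA\ge\underline m\,e^{-\omega(t_1-t_0)}\}$ is a compact subset of $\{\HA>0\}\subseteq\operatorname{int}\Cs$, so $\min_D h_j>0$. The resulting lower bound on $m_{\cA}$ therefore depends on this minimum; it is not of the form $m_{\cA}(x(t_0))e^{-L(t-t_0)}$.

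\textbf{Minor point on continuity.} $\kb(x)$ need not lie in $\operatorname{int}\U$, since the backup law may saturate, so it is not a Slater point of the full feasible set. What actually works, and what the paper uses, is that $\U$ does not depend on $x$. For $u_0\in\KA(x)$, the points $(1-\varsigma)u_0+\varsigma\kb(x)$ lie in $\U$ and satisfy every $x$-dependent constraint with slack proportional to $\varsigma$. That slack persists for states near $x$, which gives lower semicontinuity of $\KA$.
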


\begin{proof}
See Appendix~\ref{app:invariance}.
\end{proof}

Theorem~\ref{thm:invariance} establishes safety for a fixed ball family on its operating set, which contains the implicit safe set for the same prediction horizon. The family, however, grows online.

\section{Online Growth}
\label{sec:appends}
We first extend the safety and existence results to locally finite append sequences, then present the online growth rules.

\subsection{Locally Finite Appends}

An \emph{append path} ${t \mapsto \cA(t)}$, ${t \geq t_0 \geq 0}$, equals $\cA_j$ on $[t_j,t_{j+1})$, where $\cA_0$ is a ball family, ${\cA_{j-1} \appext \cA_j}$ for ${j \geq 1}$, and the \emph{update times} ${t_1 < t_2 < \cdots}$ after $t_0$ are finitely many in every bounded interval, with ${t_{j+1} = \infty}$ after the last one. The \emph{grown safety filter} $\mathbf{k}^{\star}_{\cA(t)}(t,x)$ is the minimizer of \eqref{eq:filter:qp} for $\cA(t)$ and may jump at the update times. We consider the closed loop system
\begin{equation}\label{eq:append:loop}
\dot x(t)=f(x(t))+g(x(t))\,\mathbf{k}^{\star}_{\cA(t)}(t,x(t)),\quad x(t_0)=x_0.
\end{equation}
A solution of it on an interval ${I \subseteq [t_0,\infty)}$ containing $t_0$ is a continuous function ${x : I \to \R^n}$, $C^1$ on each $[t_j,t_{j+1})$, that satisfies it at every ${t \in I}$ with the right derivative at $t_j$.

\begin{corollary}\label{cor:append}
Consider the system \eqref{eq:system}. Suppose $S_0$ is a base set and ${t \mapsto \cA(t)}$ is an append path. Then, every solution of \eqref{eq:append:loop} with ${x(t_0) \in \OT(\mathcal R_{\cA_0})}$ satisfies ${x(t) \in \OT(\mathcal R_{\cA(t)}) \subseteq \Cs}$ on its domain, with ${m_{\cA(t)}(x(t)) > 0}$ if ${m_{\cA_0}(x(t_0)) > 0}$. Moreover, from every $x_0$ with ${m_{\cA_0}(x_0) > 0}$, \eqref{eq:append:loop} has a solution, and every maximal solution is defined on $[t_0,\infty)$.
\end{corollary}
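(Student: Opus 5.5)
The plan is to argue interval by interval on the update partition, applying Theorem~\ref{thm:invariance} on each $[t_j,t_{j+1})$ for the fixed family $\cA_j$ and transferring membership across update times with the append monotonicity of Theorem~\ref{thm:growth}.

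\emph{Safety and margin.} Consider a solution $x$ on an interval $I$ with $x(t_0)\in\OT(\mathcal R_{\cA_0})$. On $I\cap[t_j,t_{j+1})$ the map $x$ is $C^1$ and solves \eqref{eq:inv:qploop} for the fixed family $\cA_j$, with initial time $t_j$ and right derivative at $t_j$. Suppose $x(t_j)\in\OT(\mathcal R_{\cA_j})$. Then forward invariance from Theorem~\ref{thm:invariance} gives $x(t)\in\OT(\mathcal R_{\cA_j})$ on that piece. By continuity of $x$ and closedness of $\OT(\mathcal R_{\cA_j})$, also $x(t_{j+1})\in\OT(\mathcal R_{\cA_j})$. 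Since $\cA_j\appext\cA_{j+1}$, Theorem~\ref{thm:growth} gives $\OT(\mathcal R_{\cA_j})\subseteq\OT(\mathcal R_{\cA_{j+1}})$. Hence $x(t_{j+1})\in\OT(\mathcal R_{\cA_{j+1}})$, and induction on $j$ finishes this part, since only finitely many update times precede any $t\in I$. The strict margin is handled the same way, using two facts.
\begin{itemize}
\item On each bounded piece, Theorem~\ref{thm:invariance} gives $\inf m_{\cA_j}>0$. Continuity of $m_{\cA_j}$ then gives $m_{\cA_j}(x(t_{j+1}))>0$; this continuity holds because $\HA$ and $\phib$ are continuous and the minimum over the compact $[0,T]$ is continuous.
\item Monotonicity $m_{\cA_{j+1}}\ge m_{\cA_j}$ holds pointwise: $\mathcal R_{\cA_j}\subseteq\mathcal R_{\cA_{j+1}}$ with the same balls on the old indices, so $H_{\cA_{j+1}}\ge H_{\cA_j}$ as a maximum over a larger index set.
\end{itemize}
On a final unbounded piece $[t_J,\infty)$, positivity of $m$ at each finite $t$ follows by applying Theorem~\ref{thm:invariance} on $[t_J,t+1)$.

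\emph{Existence and global extension.} I would build a solution by concatenation. Given $x(t_j)$ with $m_{\cA_j}(x(t_j))>0$, Theorem~\ref{thm:invariance} provides a solution of \eqref{eq:inv:qploop} for $\cA_j$ from $(t_j,x(t_j))$ defined on $[t_j,\infty)$. Restrict it to $[t_j,t_{j+1}]$. The value at $t_{j+1}$ has $m_{\cA_j}>0$, hence $m_{\cA_{j+1}}>0$, so we restart with the next family. Because update times are locally finite, every bounded interval is covered after finitely many steps, so the concatenation is defined on $[t_0,\infty)$. It is continuous, $C^1$ on each $[t_j,t_{j+1})$, and satisfies \eqref{eq:append:loop} with right derivatives at the $t_j$.

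For maximal solutions, suppose a maximal solution were defined only on $[t_0,\tau)$ with $\tau<\infty$. Only finitely many updates lie below $\tau$, so its last piece $[t_J,\tau)$ is a maximal-type solution for the fixed family $\cA_J$ started at a state with positive margin. Theorem~\ref{thm:invariance} asserts that every maximal solution from such a state is global. The restriction to $[t_J,\tau)$ is a solution for $\cA_J$, so it extends to a global $\cA_J$-solution, at least up to $t_{J+1}>\tau$ or beyond. Gluing this extension contradicts maximality.

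\emph{Main obstacle.} The delicate step is this extension argument, in particular when $\tau$ equals an update time $t_{J+1}$. There I would first extend to the closed endpoint $t_{J+1}$. This uses boundedness of the solution, which follows because it stays in the compact-margin region given by Theorem~\ref{thm:invariance}'s positive infimum together with the growth bound. I would then restart with $\cA_{J+1}$. Care is also needed that the right-derivative convention at $t_j$ matches the fixed-family notion of solution in Theorem~\ref{thm:invariance}.
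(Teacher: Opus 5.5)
Your proposal is correct and follows essentially the same route as the paper: induction over the update times using the invariance and positive-margin claims of Theorem~\ref{thm:invariance} on each $[t_j,t_{j+1})$, continuity and closedness at $t_{j+1}$, append monotonicity from Theorem~\ref{thm:growth} together with $m_{\cA_{j+1}}\ge m_{\cA_j}$, and then concatenation of global fixed-family solutions for existence and globality. Your contradiction argument for maximal solutions is a more detailed version of the paper's last sentence. The boundedness step you flag as the main obstacle is unnecessary, since the global $\cA_J$-solution given by Theorem~\ref{thm:invariance} already supplies the value at $t_{J+1}$, from which you restart with $\cA_{J+1}$.
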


\begin{proof}
See Appendix~\ref{app:append}.
\end{proof}

Corollary~\ref{cor:append} takes the append path as given, whereas online append decisions depend on the observed trajectory.

\subsection{Append Rules}

An \emph{append rule} decides at locally finite update times, from the trajectory up to that time and the family $\mathcal B$ in force, to retain $\mathcal B$ or to append finitely many links. It is \emph{certified} if every accepted append is an append extension of $\mathcal B$.

The rule fixes the family on each interval between update times from the trajectory up to its start; therefore, the induction in the proof of Corollary~\ref{cor:append} applies to the realized append path. Hence both claims of Corollary~\ref{cor:append} hold for a certified append rule started from a ball family $\cA_0$.

\emph{Demand-driven growth} is a certified rule whose search seeds are at the endpoints of rollouts taken along the nominal motion. At $t_j=jT$ it integrates the nominal closed loop $\dot{\hat x}=f(\hat x)+g(\hat x)\kp(t_j+s,\hat x)$ from $x(t_j)$ over $[0,T]$ by a fixed-step integrator and calls a rollout endpoint $q\defeq\phib(T,\hat x(s))$ a \emph{coverage deficit} of $\mathcal B$ when $H_{\mathcal B}(q)<0$. It visits the current deficits, earliest $s$ first, and then those carried from the previous update, and starts no visit after a computation time of $T$. Each visit works with the family $\mathcal B'$ formed by $\mathcal B$ and the links accepted so far in this update, and passes over a deficit in $\mathcal R_{\mathcal B'}$. Otherwise it follows the backup arc in blocks of duration at most $T$ measured from $q$, until the arc enters the interior of some $S_a$, $a\in\mathcal B'$, or a fixed budget of blocks is spent, cuts links at the block boundaries and docks the last link into $\mathcal B'$. The backup arc from $q$ and these blocks play the role of $c$ and of the segments of Theorem~\ref{thm:link}. The resulting links are accepted only if $\mathcal B'$ with them is an append extension of $\mathcal B'$. A deficit whose arc yields no links, or that is not visited, is carried to the next update.

\section{Planar Quadrotor Example}
\label{sec:sim}
\begin{figure}[tb]
\centering
\setlength{\abovecaptionskip}{1pt}
\includegraphics[width=\columnwidth]{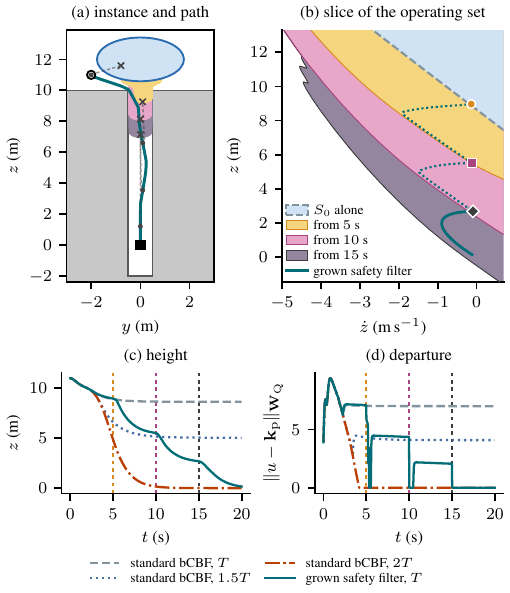}
\caption{Simulation results for a planar quadrotor descending through a narrow slot, comparing the grown safety filter (solid) with horizon $T$ and the standard bCBF with horizons $T$, $1.5T$, and $2T$.
(a) The trajectory under the grown safety filter, with the unsafe set in gray, the $y$--$z$ projection of the base set $S_0$ at the backup hover, and the stored balls colored as in (b). The initial position and goal are marked by $\circ$ and $\blacksquare$. Four backup predictions (dashed) start at dots along the trajectory and end at crosses.
(b) The operating set on the slice $y=\theta=\dot y=\dot\theta=0$ with $S_0$ alone and after each append. The trajectory is shown from the first append onward, with the portion before reaching the slice dotted and the append times marked. (c) The quadrotor height. (d) The control input deviation $\|u-\kp\|_{\WQ}$. Dashed vertical lines indicate the append times.}
\label{fig:sim}
\vspace{-4mm}
\end{figure}

We illustrate the grown safety filter on a planar quadrotor descending through a narrow slot.\renewcommand{\thefootnote}{\ddag}\footnote{The code and additional details of our simulations and results can be found at {\urlstyle{same}\url{https://github.com/myusufuzun/bcbf-memory}}}
The state and the input are $\xi\defeq [y~z~\theta~\dot y~\dot z~\dot\theta]^{\top}$, $u=[F~M]^{\top}$,
with the input set $\U=[0,12]\times[-12,12]$, and the dynamics
\begin{equation}
\label{eq:sim:model}
\!\!\ddot y \!={F\sin\theta}/{m_{\mathrm v}},\
\ddot z \!={F\cos\theta}/{m_{\mathrm v}}-9.81,\
\ddot \theta \!=-{M}/{J_{\mathrm v}},
\end{equation}
where $y$ and $z$ are the horizontal and the vertical position, $\theta$ the pitch, $F$ the thrust, $M$ the moment, $m_{\mathrm v}=1$ the mass and $J_{\mathrm v}=0.25$ the inertia. We use SI units.

The drone starts at rest at $(y,z)=(-2,11)$ and targets the origin, inside a slot of half-width $0.5$. The functions $h^{(1)}(\xi)=y+0.5$, $h^{(2)}(\xi)=0.5-y$, $h^{(3)}(\xi)=z+2$ and $h^{(4)}(\xi)=z-10$ describe the slot walls and floor. We take ${\nh = 1}$ and define $h_1$ as the smooth underapproximation \cite{molnar2025} of $\max\{\min\{h^{(1)},h^{(2)},h^{(3)}\},h^{(4)}\}$. As $h_1$ is strictly increasing in $z$, ${\operatorname{int}\Cs = \{\xi \in \R^6 : h_1(\xi) > 0\}}$.

The nominal controller $\kp$ and backup controller $\kb$ use the same cascaded PD structure with smooth saturation. From a target $(y_{\mathrm d},z_{\mathrm d})$ and gains $\gamma_{\mathrm p},\gamma_{\mathrm d}>0$ we form the commanded acceleration $a\defeq(a_y,a_z)$, with $a_y\defeq-\gamma_{\mathrm p}(y-y_{\mathrm d})-\gamma_{\mathrm d}\dot y$ and $a_z\defeq-\gamma_{\mathrm p}(z-z_{\mathrm d})-\gamma_{\mathrm d}\dot z+9.81$, and the law returns the thrust and the moment of \eqref{eq:sim:model} as
\begin{equation}\label{eq:sim:backup}
\begin{aligned}
    \theta_{\mathrm d}&=\operatorname{sat}_{[-\pi/4,\pi/4]}
    \bigl(\operatorname{atan2}(a_y,a_z)\bigr),\\
    F&=\operatorname{sat}_{[0,12]}\bigl(m_{\mathrm v}(a_y^2+a_z^2)^{1/2}\bigr),\\
    M&=\operatorname{sat}_{[-12,12]}
    \bigl(J_{\mathrm v}\,(64(\theta-\theta_{\mathrm d})+16\dot\theta)\bigr).
\end{aligned}
\end{equation}
The backup controller $\kb$ takes $(\gamma_{\mathrm p},\gamma_{\mathrm d})=(0.16,0.8)$ and the hover target $(0,12)$. The nominal controller $\kp$ takes $(0.81,1.8)$ and the goal $(0,0)$. Both laws \eqref{eq:sim:backup} are $C^\infty$ where $a_z>0$, so we take $\kb$ in \eqref{eq:setting:backup} and $\kp$ to be the laws with $a_z$ replaced by $\psi(a_z)$, where $\psi$ is $C^\infty$ and positive with ${\psi(\zeta) = \zeta}$ for ${\zeta \geq \underline{a}}$ and a fixed ${\underline{a} > 0}$.

The matrix $\mathbf{P}$ of the norm $\|\cdot\|_{\mathbf{P}}$ is fixed, as in \cite{gurriet2020}, by a Lyapunov ellipsoid of the backup closed loop linearized at the hover, with Jacobian $\mathbf{A}$. Its shape $\mathbf{W}$ has the largest volume with $\mathbf{A}\mathbf{W}+\mathbf{W}\mathbf{A}^{\top}+\nu \mathbf{W}\preceq0$ for some $\nu>0$, inside the safe set and the smooth actuator regime, halved in level so that the nonlinear backup field still contracts it, and $\mathbf{P}=2R_0^{2}\mathbf{W}^{-1}$ makes that ellipsoid the base set, the ball $S_0=B(\xi_\star,R_0)$ about the hover equilibrium $\xi_\star$ with $R_0=\sqrt{0.22}$. Its defining function $b_0$ is the ball function at $c=\xi_\star$ and $r=R_0$, and we take $\alpha_0(\zeta)=\alb(R_0\zeta)/R_0$. For $\kb$, this $S_0$ lies in ${\{a_z \geq \underline{a}\}}$, and we assume that this $b_0$ defines a base set. The prediction horizon is ${T = 5}$~s, and the remaining data are ${\alpha_1(\zeta) = 2\zeta + \zeta^3}$, ${\alb(\zeta) = 10\zeta + \zeta^3}$, ${\rho(\zeta) = 40\zeta}$ and ${\WQ = \mathbf{I}}$. Fig.~\ref{fig:sim}(a) shows the simulation setup and the trajectory under the grown safety filter.

We compare the grown safety filter with horizon $T$ against the \emph{standard bCBF} \cite{chen2021} with terminal set $S_0$ and horizons $T$, $1.5T$, and $2T$. Each run lasts $20$~s and uses RK4 with a $20$~ms step, with inputs held constant between steps. Both filters sample the rollout constraints every $20$~ms. Constraint tightening can certify safety between rollout samples \cite{vanwijk2024}. The grown family starts from $\cA_0=\{0\}$. Demand-driven growth searches start at $0$, $5$, and $10$~s, and accepted links are appended $T$ seconds later.

\looseness=-1
Fig.~\ref{fig:sim} shows that the quadrotor reaches the goal safely under the grown safety filter with $T=5$~s. Under the standard bCBF, it stops at $z = 8.64$ for horizon $T$, at $z = 5.04$ for $1.5T$, and reaches the goal with $2T$. On the slice in Fig.~\ref{fig:sim}(b), the three appends lower the operating set boundary at rest from $z = 8.64$ to $z = 5.33$, $2.36$, and $-0.49$. Each append permits further descent, and the control input equals $\kp$ from $15.0$~s onward. This terminal set growth increases the computation time per step, excluding the rollout, by at most $0.5$~ms, from $0.6$~ms with $S_0$ alone to $1.1$~ms with $6229$ stored balls. The median rollout computation time is $1.2$~ms for $T$ and $2.5$~ms for $2T$.

\section{Conclusion}
\label{sec:conclusion}
We presented a construction that grows the terminal set of a bCBF-based safety filter by appending certified links. We showed that the terminal set is forward invariant under the backup flow and that its states reach the base set within a bounded chain time that can exceed the horizon. The safety filter is feasible at every state and preserves safety under certified online appends. Future work will focus on bounding the family size and on designing searches guaranteed to find states from which links can be appended.

\appendix

\subsection{Proof of Lemma~\ref{lem:collar}}\label{app:collar}

\begin{proof}
By \eqref{eq:collar:eps}, ${0 < \epsA \leq \varepsilon_0}$. Fix $x$ with ${-\epsA < \HA(x) < \varepsilon_0}$. If ${0 \in \MA(x)}$, then ${|b_0(x)| = |\HA(x)| < \varepsilon_0}$, and ${\lamA \geq \lambda_0}$ with \eqref{eq:setting:basedrift} gives $\nabla b_0(x)\,\vb(x) \geq -\lambda_0\,|b_0(x)| \geq -\lamA\,|\HA(x)|$.

Suppose ${0 \notin \MA(x)}$. No base dock $a$ lies in $\MA(x)$, since ${\beta_a(x) = \HA(x) > -\epsA}$ implies ${\|x - c_a\| < R_a + \eta_a}$ and hence, by the docks condition (see Definition~\ref{def:semfam}), ${b_0(x) \geq \beta_a(x) = \HA(x)}$, that is, ${0 \in \MA(x)}$. A dock ${a \in \MA(x)}$ therefore has $p(a)$ in an older link, and ${\beta_{p(a)} \geq \beta_a}$ by the docks condition and \eqref{eq:env:nest}, so ${p(a) \in \MA(x)}$. Hence every ${(k,l) \in \MA(x)}$ with $k$ minimal is a flow step.

For such an $a$, let ${e \defeq x - c_a}$. Then ${\beta_a(x) = \HA(x)}$ gives
	\begin{equation*}
		\|e\|^2 = R_a^2 - R_a \HA(x) < R_a^2 + \epsA R_a \leq (R_a + \eta_a)^2 ,
	\end{equation*}
	so \eqref{eq:succ:osl} applies at $x$. Moreover, expanding ${\beta_{p(a)}(x) \leq \HA(x)}$ with ${x - c_{p(a)} = e - d_a}$ and the value of $\|e\|^2$ gives
	\begin{equation*}
		-2 e^\top \mathbf{P} d_a \geq R_{p(a)}^2 - R_a^2 - \|d_a\|^2 - \bigl(R_{p(a)} - R_a\bigr) \HA(x) ,
	\end{equation*}
	and ${\HA(x) \leq R_a}$, as ${\|e\|^2 \geq 0}$, gives ${\|e\| \leq R_a - \HA(x)/2}$. Hence, by \eqref{eq:succ:osl}, Cauchy--Schwarz, \eqref{eq:collar:const}, \eqref{eq:succ:exposure} and \eqref{eq:collar:lam},
	\begin{align*}
    	& \vartheta_a R_a \nabla \beta_a(x)\, \vb(x) \\
    	& = -2 e^\top \mathbf{P} \bigl[\vartheta_a \bigl(\vb(x) - \vb(c_a)\bigr) + \bigl(\vartheta_a \vb(c_a) - d_a\bigr) + d_a\bigr] \\
    	& \geq -2 \mu_a \vartheta_a \|e\|^2 - 2 \|e\| E_a - 2 e^\top \mathbf{P} d_a \\
    	& \geq \vartheta_a R_a \ell_a \HA(x) + R_{p(a)}^2 - (1 + 2 \mu_a \vartheta_a) R_a^2 - \|d_a\|^2 - 2 R_a E_a \\
    	& \geq \vartheta_a R_a \ell_a \HA(x) \geq -\vartheta_a R_a \lamA\, |\HA(x)| .\qedhere
    \end{align*}
\end{proof}

\subsection{Proof of Theorem~\ref{thm:growth}}\label{app:growth}

\begin{proof}
By Definitions~\ref{def:base} and~\ref{def:semfam}, ${\emptyset \neq S_0 \subseteq \RA \subseteq \Cs}$, and $\RA$ is compact as a finite union of compact sets.

For ${x \in \RA}$, the $C^1$ functions ${w_a(t) \defeq \beta_a(\phib(t,x))}$, ${a \in \cA}$, have the maximum ${w(t) = \HA(\phib(t,x))}$, with ${w(0) \geq 0}$ by \eqref{eq:env:level}. At every $t$ with ${w(t) \in (-\epsA, 0]}$, \eqref{eq:setting:complete} and Lemma~\ref{lem:collar} give ${\dot w_a(t) \geq -\lamA\,|w(t)| = \lamA\,w(t)}$ for some $a$ with ${w_a(t) = w(t)}$. Lemma~\ref{lem:nocross} with ${\varepsilon=\epsA}$ and ${\lambda=\lamA}$ on every $[0,t_1]$ then gives ${w(t) \geq 0}$, that is, ${\phib(t,x) \in \RA \subseteq \Cs}$, for all ${t \geq 0}$. Hence ${x \in \OT(\RA)}$. For the chain time claim, ${\phib(t, S_a) \subseteq S_0}$ for ${t \geq \Tch(a)}$ follows by induction along successor chains. For ${a = 0}$, this is the forward invariance of $S_0$. For ${a \neq 0}$, ${\phib(\vartheta_a,S_a)\subseteq S_{p(a)}}$ by the arrival condition (see Definition~\ref{def:semfam}) at a flow step and the docks condition at a dock, where ${\vartheta_a = 0}$, with ${b_0 \geq \beta_a \geq 0}$ on $S_a$ at a base dock. Then, for ${t \geq \Tch(a) = \vartheta_a+\Tch(p(a))}$, \eqref{eq:setting:semigroup} and the claim for $p(a)$ give ${\phib(t, S_a) \subseteq \phib(t - \vartheta_a, S_{p(a)}) \subseteq S_0}$. A successor chain crosses at most $\nlink$ links, each with ${L_k \leq T}$, so ${\Tch(a) \leq \nlink T}$.

For the append-monotonicity claim, ${\cA \subseteq \cA'}$ with the same balls on $\cA$ gives ${H_{\cA'} \geq \HA}$, hence ${\RA \subseteq \mathcal R_{\cA'}}$ by \eqref{eq:env:level} and ${\OT(\RA) \subseteq \OT(\mathcal R_{\cA'})}$ by \eqref{eq:setting:opset}.
\end{proof}

\subsection{Proof of Theorem~\ref{thm:link}}\label{app:link}

\begin{proof} The distances $\delta_L \defeq \dist(c(L), \R^n \setminus \operatorname{int} S_{a_L})$ and $\delta_{\mathrm s} \defeq \dist(c([0,L]), \R^n \setminus \operatorname{int} \Cs)$ are positive, with ${\delta_{\mathrm s} = \infty}$ if ${\Cs = \R^n}$. We take $r_{\max} \in (0, \min\{\delta_L, \delta_{\mathrm s}\})$ with ${b_0 \geq r_{\max}}$ on ${B(c(L), 2 r_{\max})}$ if ${a_L = 0}$, which is possible as $b_0$ is continuous and positive at ${c(L) \in \operatorname{int} S_0}$. Let $\mathcal N$ be the convex hull of the closed $2r_{\max}$-thickening of the arc and $\mu$ the maximum over the compact $\mathcal N$ of the logarithmic norm \cite{soderlind2006} of the Jacobian $\partial_x\vb$ in the norm $\|\cdot\|$, a one-sided Lipschitz constant of $\vb$ on $\mathcal N$ \cite{aminzare2014}. We then fix $\kappa\defeq\max\{\mu,0\}+1/L$, $\gamma_r\defeq\kappa-\mu>0$, $r(s)\defeq r_{\max} e^{\kappa(s-L)}$, $V_c\defeq\max_{[0,L]}\|\dot c\|$, $A_c\defeq\max_{[0,L]}\|\ddot c\|$ and
\begin{equation}\label{eq:link:count}
    N_0\defeq1+\biggl\lfloor \frac{L(V_c^2+r_{\max}A_c)}{2n_{\mathrm{s}}\gamma_r r(0)^2} \biggr\rfloor,
\end{equation}
where $\lfloor\cdot\rfloor$ denotes the floor function. Let ${N \geq N_0}$ and take a flow step $a$ of the new links, with ${\mu_a \defeq \mu}$ and ${\eta_a \defeq r_{\max}}$. Then ${\vartheta_a = L/(n_{\mathrm{s}}N)}$ and, for some ${s \in [0, L - \vartheta_a]}$, ${c_a = c(s)}$, ${R_a = r(s)}$, ${c_{p(a)} = c(s + \vartheta_a)}$ and ${R_{p(a)} = e^{\kappa\vartheta_a} R_a \leq r_{\max}}$.

For the safety condition (see Definition~\ref{def:semfam}), every new ball lies within ${r_{\max} < \delta_{\mathrm s}}$ of the arc, hence in $\Cs$. For the arrival condition (see Definition~\ref{def:semfam}), let ${x \in S_a}$. For ${t \in [0,\vartheta_a]}$ with ${\phib([0,t],x) \subseteq \mathcal N}$, the one-sided Lipschitz estimate for the solutions $\phib(\cdot,x)$ and $c(s+\cdot)$ of \eqref{eq:setting:backup} in the convex $\mathcal N$ gives ${\|\phib(t,x) - c(s + t)\| \leq e^{\mu t} R_a}$, and ${e^{\mu t} R_a \leq e^{\kappa t} R_a = r(s + t) \leq r_{\max}}$. This bound keeps $\phib(t,x)$ within $r_{\max}$ of the arc, hence in $\operatorname{int} \mathcal N$, on ${[0,\vartheta_a]}$, and at ${t = \vartheta_a}$ it gives ${\phib(\vartheta_a, S_a) \subseteq S_{p(a)}}$.

For the exposure condition (see Definition~\ref{def:semfam}), \eqref{eq:succ:osl} holds, as $\mathcal N$ contains ${B(c_a, R_a + \eta_a)}$. As $\vb$ is $C^1$, the arc $c$ is $C^2$, and Taylor's formula gives ${\|d_a\| \leq V_c \vartheta_a}$ and ${E_a \leq \tfrac12 A_c \vartheta_a^2}$. With ${e^{2\kappa\vartheta_a} \geq 1 + 2\kappa\vartheta_a}$, ${r(0) \leq R_a \leq r_{\max}}$ and \eqref{eq:link:count},
	\begin{equation*}
		\begin{aligned}
			& R_{p(a)}^2 - (1 + 2 \mu_a \vartheta_a) R_a^2 - \|d_a\|^2 - 2 R_a E_a \\
			& \quad \geq R_a^2 \bigl(e^{2 \kappa \vartheta_a} - 1 - 2 \mu \vartheta_a\bigr) - \bigl(V_c^2 + r_{\max} A_c\bigr) \vartheta_a^2 \\
			& \quad \geq \vartheta_a \bigl[2 \gamma_r r(0)^2 - \bigl(V_c^2 + r_{\max} A_c\bigr) \vartheta_a\bigr] > 0 ,
		\end{aligned}
	\end{equation*}
	which gives \eqref{eq:succ:exposure}.

For the docks condition (see Definition~\ref{def:semfam}), a dock $a$ where two segments meet has ${S_a = S_{p(a)}}$. As ${r_{\max} < \delta_L}$, the last dock $a$ has ${S_a = B(c(L),r_{\max}) \subseteq \operatorname{int}S_{a_L}}$, and, if ${a_L = 0}$, ${\eta_a \defeq r_{\max}}$ serves, as ${\beta_a \leq r_{\max} \leq b_0}$ on ${B(c(L), 2 r_{\max})}$.

For the horizon condition, each new link has duration ${L_k = L/n_{\mathrm{s}} \in (0,T]}$. The append order makes the successor of every dock older and leaves the old family unchanged, so the result is an append extension. The entrance at ${s = 0}$ contains $c(0)$, which gives strict enlargement if ${c(0) \notin \mathcal R_{\cA_{\mathrm{old}}}}$.
\end{proof}

\subsection{Proof of Theorem~\ref{thm:invariance}}\label{app:invariance}

\begin{proof}
By \eqref{eq:env:level},
\begin{equation}\label{eq:filter:opmargin}
	\OT(\RA)=\{x\in\R^n : m_{\cA}(x)\ge0\} .
\end{equation}
First, we prove the invariance and positive margin claims for a solution $x$ of \eqref{eq:inv:qploop} on ${I = [t_0,t_1]}$ or ${I = [t_0,t_1)}$ with ${x(t_0) \in \OT(\RA)}$. We fix ${\underline{m} \in [0, \varepsilon_0)}$ with ${\underline{m} \leq m_{\cA}(x(t_0))}$, taking ${\underline{m} = 0}$ for the invariance claim. As the gains are locally Lipschitz, there is ${\omega \geq \lamA}$ with ${\alpha_i(\zeta) \leq \omega\zeta}$ and ${\albhat_a(\zeta) \leq (\omega - \lamA)\zeta}$ for all $i$, ${a \in \cA}$ and ${\zeta \in [0, \varepsilon_0]}$. We begin with the terminal part. The function ${\gamma(t) \defeq \HA(x_T(t))}$, with ${x_T(t) = \phib(T, x(t))}$ along the solution, is the maximum of the $C^1$ functions ${t \mapsto \beta_a(x_T(t))}$, ${a \in \cA}$. At every $t$, $\dot x=\vb(x)+g(x)(u-\kb(x))$ with $u=\kQ(t,x)\in\KA(x)$, and \eqref{eq:setting:flowid} gives $\dot x_T=\vb(x_T)+\Phib(T,x)g(x)(u-\kb(x))$. Where ${\gamma \in (-\epsA, \underline{m}]}$, Lemma~\ref{lem:collar}, \eqref{eq:filter:term} with ${\rho(0) = 0}$ and the bound on $\albhat_a$ give, for some ${a \in \MA(x_T)}$,
\begin{equation*}
    \begin{aligned}
        \frac{d}{dt}\beta_a(x_T) &= \nabla\beta_a(x_T)\dot x_T \geq -\albhat_a(\gamma^{+}) +\nabla\beta_a(x_T)\vb(x_T) \\
        &\geq -(\omega - \lamA)\gamma^{+} - \lamA|\gamma| \geq -\omega|\gamma| .
    \end{aligned}
\end{equation*}
The functions ${e^{\omega (t - t_0)} \beta_a(x_T) - \underline{m}}$, ${a \in \cA}$, have the maximum ${e^{\omega (t - t_0)} \gamma - \underline{m}}$, which is nonnegative at $t_0$ and in ${(-\epsA, 0]}$ only where ${\gamma \in (-\epsA, \underline{m}]}$, so Lemma~\ref{lem:nocross} with $\varepsilon=\epsA$ and ${\lambda = 2\omega}$ on every ${[t_0,t] \subseteq I}$ gives ${\gamma(t) \geq \underline{m}\, e^{-\omega (t - t_0)}}$.

For the rollout part, we fix $i$ and ${(t^\ast, \tau^\ast) \in I \times [0, T]}$ and set ${t_\natural \defeq \max\{t_0, t^\ast - (T - \tau^\ast)\}}$. For ${t \in [t_\natural, t^\ast]}$, we let ${\tau \defeq \tau^\ast + t^\ast - t \in [\tau^\ast, T]}$, ${x_\tau \defeq \phib(\tau, x(t))}$ and ${w_i(t) \defeq h_i(x_\tau)}$, a $C^1$ function. Then \eqref{eq:setting:complete}, \eqref{eq:setting:flowid} and \eqref{eq:filter:roll} at $\tau$ give
\begin{equation*}
    \begin{aligned}
        \dot w_i&=\nabla h_i(x_\tau)\bigl[\Phib(\tau,x)\dot x-\vb(x_\tau)\bigr]\\
        &=\nabla h_i(x_\tau)\Phib(\tau,x)g(x)\bigl(u-\kb(x)\bigr)
        \ge-\alpha_i\bigl(w_i^{+}\bigr).
    \end{aligned}
\end{equation*}
The set ${D \defeq \{\HA \geq \underline{m}\, e^{-\omega (t_1 - t_0)}\}}$ contains $x_T(t_0)$ and is a compact subset of ${\RA \subseteq \Cs}$, by Theorem~\ref{thm:growth}, and, if ${\underline{m} > 0}$, of the open set ${\{\HA > 0\} \subseteq \operatorname{int} \Cs}$, on which every $h_j$ is positive. Hence ${\varpi \defeq \min\{\underline{m}, \min_{j} \min_{D} h_j\}}$ satisfies ${0 \leq \varpi \leq \underline{m}}$, with ${\varpi > 0}$ if ${\underline{m} > 0}$. We have ${w_i(t_\natural) \geq \varpi}$, by ${m_{\cA}(x(t_0)) \geq \underline{m}}$ if ${t_\natural = t_0}$ and otherwise because ${\tau = T}$ there and ${x_T(t_\natural) \in D}$ by the terminal part. Where ${w_i \leq \varpi}$, ${\dot w_i \geq -\omega w_i^{+}}$ by the bound on $\alpha_i$, so Lemma~\ref{lem:nocross} with the single function ${e^{\omega (t - t_\natural)} w_i - \varpi}$, any $\varepsilon>0$ and ${\lambda = \omega}$ gives ${h_i(\phib(\tau^\ast,x(t^\ast))) = w_i(t^\ast) \geq \varpi e^{-\omega (t_1 - t_0)}}$. As $i$, $t^\ast$ and $\tau^\ast$ are arbitrary and ${\varpi \leq \underline{m}}$, the terminal part gives ${m_{\cA}(x(t)) \geq \varpi e^{-\omega (t_1 - t_0)}}$ for all ${t \in I}$, which is the positive margin claim if ${\underline{m} > 0}$ and, by \eqref{eq:filter:opmargin}, the invariance claim if ${\underline{m} = 0}$. The inclusion ${\OT(\RA) \subseteq \Cs}$ is from \eqref{eq:setting:opset} with ${\tau = 0}$.

For the continuity claim, let $\chi(\iota,x,u)$ be the left minus the right side of constraint $\iota$, the constraints being indexed by the compact set ${J\defeq(\{1,\dots,\nh\}\times[0,T])\sqcup\cA}$. Then $\chi$ is affine in $u$ and continuous on ${J\times\R^n\times\U}$, as $\HA$ is continuous, so $\KA$, with ${\KA(x) = \{u \in \U : \min_{\iota \in J} \chi(\iota,x,u) \geq 0\}}$, has a closed graph and, $\U$ being compact, is upper semicontinuous. Let ${m_{\cA}(x) > 0}$. Then ${\underline{\chi} \defeq \min_{\iota \in J} \chi(\iota,x,\kb(x)) > 0}$, as the gains read arguments of at least $m_{\cA}(x)$ and ${\rho \geq 0}$. For ${u_0 \in \KA(x)}$ the points ${u_\varsigma \defeq (1 - \varsigma) u_0 + \varsigma \kb(x) \in \U}$, ${\varsigma \in (0,1]}$, have ${\chi(\iota,x,u_\varsigma) \geq \varsigma \underline{\chi}}$ for all $\iota$, hence stay feasible near $x$, and ${u_\varsigma \to u_0}$ as ${\varsigma \to 0}$. So $\KA$ is lower semicontinuous at $x$, and Berge's theorem and strict convexity make $\kQ$ continuous at $(t,x)$. For the closed loop claim, ${\{m_{\cA} > 0\}}$ is open, as $m_{\cA}$ is continuous, so the continuity claim and Peano's theorem give a local solution of \eqref{eq:inv:qploop} from every $x_0$ in it. By the positive margin claim and \eqref{eq:filter:opmargin}, a solution on a bounded interval $[t_0,t_1)$ from such an $x_0$ stays in a compact subset of ${\{m_{\cA} > 0\}}$, as $\OT(\RA)$ is compact, being closed and contained in $\phib(-T,\RA)$. With $f$ and $g$ bounded there and $\U$ compact, every maximal solution is global.
\end{proof}

\subsection{Proof of Corollary~\ref{cor:append}}\label{app:append}

\begin{proof}
	Every $\cA_j$ is a ball family, so Theorems~\ref{thm:growth} and~\ref{thm:invariance} apply to it. We first prove the persistence claim by induction over the update times, finitely many up to each $t$ in the domain. Suppose ${x(t_j) \in \OT(\mathcal R_{\cA_j})}$, which holds for ${j = 0}$. For every ${t \in (t_j,t_{j+1})}$ in the domain, $x$ solves \eqref{eq:inv:qploop} for $\cA_j$ from $x(t_j)$ on $[t_j,t]$, so the invariance claim of Theorem~\ref{thm:invariance} gives ${x(t) \in \OT(\mathcal R_{\cA_j})}$. If $t_{j+1}$ lies in the domain, continuity of $x$, closedness of $\OT(\mathcal R_{\cA_j})$ and the append-monotonicity claim of Theorem~\ref{thm:growth} give ${x(t_{j+1}) \in \OT(\mathcal R_{\cA_j}) \subseteq \OT(\mathcal R_{\cA_{j+1}})}$. If also ${m_{\cA_j}(x(t_j)) > 0}$, the positive margin claim of Theorem~\ref{thm:invariance} on $[t_j,t)$ and continuity give ${m_{\cA_j}(x(t)) > 0}$ for every ${t \in (t_j,t_{j+1}]}$ in the domain, and ${m_{\cA_{j+1}} \geq m_{\cA_j}}$ by \eqref{eq:filter:margin}, as ${H_{\cA_{j+1}} \geq H_{\cA_j}}$. For the closed loop claim, let ${m_{\cA_0}(x_0) > 0}$. The closed loop claim of Theorem~\ref{thm:invariance} gives a solution $x$ of \eqref{eq:inv:qploop} for $\cA_0$ from $x_0$ on $[t_0,\infty)$. On $[t_0,t_1)$ it solves \eqref{eq:append:loop}, so, as in the induction, ${m_{\cA_1}(x(t_1)) > 0}$, the next solution starts from $x(t_1)$, and so on. The concatenation solves \eqref{eq:append:loop}, and local finiteness makes it defined on $[t_0,\infty)$. Every maximal solution is global, since on $[t_j,t_{j+1})$ it solves \eqref{eq:inv:qploop} for $\cA_j$ from a state with positive $m_{\cA_j}$, whose maximal solutions the closed loop claim of Theorem~\ref{thm:invariance} carries to $t_{j+1}$.
\end{proof}

\begin{spacing}{0.90}
\bibliographystyle{IEEEtran}
\bibliography{refs}
\end{spacing}

\end{document}